\documentclass[11pt,letterpaper]{article}
\usepackage[margin=1in]{geometry}
\usepackage{amsmath,amssymb,amsthm,mathtools,microtype}
\usepackage{booktabs}
\usepackage[T1]{fontenc}
\usepackage{lmodern}
\usepackage{enumitem}
\usepackage{tikz}
\usetikzlibrary{decorations.pathreplacing,arrows.meta}
\definecolor{setone}{HTML}{88B4E7}
\definecolor{settwo}{HTML}{F4A261}
\definecolor{setthree}{HTML}{9FD8A8}
\definecolor{setonedark}{HTML}{2A63B0}
\definecolor{settwodark}{HTML}{B5541C}
\definecolor{setthreedark}{HTML}{2E7D4F}
\definecolor{coredark}{HTML}{6B3FA0}
\definecolor{bnddark}{HTML}{8A5A00}
\usepackage{xurl}
\usepackage[hidelinks]{hyperref}
\hypersetup{
  pdftitle={PPAD-Completeness of Contiguous Envy-Free Cake Cutting with Additive Valuations},
  pdfauthor={Yasushi Kawase, Bodhayan Roy, Mohammad Azharuddin Sanpui},
  pdfkeywords={cake cutting, envy-freeness, connected allocation, PPAD, additive valuations}
}
\newtheorem{theorem}{Theorem}
\newtheorem{lemma}[theorem]{Lemma}
\newtheorem{corollary}[theorem]{Corollary}
\newtheorem{proposition}[theorem]{Proposition}
\newcommand{\clip}{\operatorname{clip}}

\newcommand{\Cake}{\textnormal{\textsc{Contiguous-EF}}}
\newcommand{\TG}{\textnormal{\textsc{ThresholdGameNE}}}
\newcommand{\email}[1]{\href{mailto:#1}{\nolinkurl{#1}}}

\title{PPAD-Completeness of Contiguous Envy-Free\protect\\
Cake Cutting with Additive Valuations\thanks{This work was supported by JSPS KAKENHI Grant Number JP25K00137 and JST ERATO Grant Number JPMJER2301.}}
\author{Yasushi Kawase\thanks{Chuo University, Tokyo, Japan (\email{ykawase422@g.chuo-u.ac.jp}).}
\and Bodhayan Roy\thanks{Indian Institute of Technology Kharagpur, Kharagpur, India (\email{broy@maths.iitkgp.ac.in}).}
\and Mohammad Azharuddin Sanpui\thanks{Indian Institute of Technology Kharagpur, Kharagpur, India (\email{azharuddinsanpui123@gmail.com}).}}
\date{}

\begin{document}
\pagenumbering{roman}
\maketitle
\thispagestyle{empty}
\begin{abstract}
We prove that computing a complete contiguous $\varepsilon$-envy-free cake allocation is PPAD-complete for normalized rational piecewise-constant additive valuations.
The hardness holds even when $\varepsilon$ is given in unary, and even when every density takes only the values $\lambda$ and $2\lambda$ for some $\lambda>0$ common to all agents and has at most six breakpoints.
As a consequence, computing an exact contiguous envy-free allocation is also PPAD-complete, even under the same density restrictions.
Our key ingredient is to assign prices to the pieces so that, in every approximately envy-free allocation, all prices are nearly equal, regardless of how the pieces are assigned to the agents.
We further show that, for every fixed integer $c\ge1$, computing a contiguous allocation of indivisible goods on a path that is envy-free up to $c$ goods (EF$c$) is PPAD-complete for additive valuations, and this remains true even for $c=\lfloor m^\alpha\rfloor$ with any fixed rational $0<\alpha<1$, where $m$ is the number of goods.
This hardness holds even for binary valuations in which each good is valued by exactly one agent and all agents value the same number of goods.
All these results extend to chores.
Our results answer three open questions at once: whether contiguous envy-free cake cutting is PPAD-hard for additive valuations, the complexity of computing contiguous EF1 and EF2 allocations of goods on a path, and the corresponding question for EF1 allocations of chores.
\end{abstract}

\newpage
\pagenumbering{arabic}

\section{Introduction}\label{sec:intro}
Fair division asks how to allocate a heterogeneous resource among agents with different preferences.
Envy-freeness is a natural fairness requirement: each agent should value its own share at least as much as any other agent's share.
In applications such as dividing land or assigning time slots, a share consisting of scattered parts may be of little use, and thus it is natural to require every agent to receive one connected piece.

In cake cutting, the resource is the interval $[0,1]$, and each of $n$ agents has a normalized additive valuation $V_i$, given here by an explicit rational piecewise-constant density.
A complete contiguous allocation partitions the cake into $n$ intervals and gives one interval to each agent.
The assignment of the intervals to the agents is part of the solution.
Such an allocation $(X_1,\ldots,X_n)$, where $X_i$ is the piece of agent $i$, is \emph{$\varepsilon$-envy-free} if
\[
 V_i(X_i)\ge V_i(X_j)-\varepsilon\qquad\text{for all }i,j\in[n].
\]
For $\varepsilon=0$, we simply say that the allocation is \emph{envy-free}.
Contiguous envy-free allocations always exist~\cite{Stromquist1980,Su1999}, but no finite protocol in the Robertson--Webb model finds one for three or more agents~\cite{Stromquist2008}.
For explicitly given valuations, computing an $\varepsilon$-envy-free allocation lies in PPAD~\cite{DengQS2012}.
PPAD-hardness, however, was known only for non-additive preferences~\cite{DengQS2012,GaoRRS2024,HollenderR2025}, and it remained open for additive valuations, the standard model of cake cutting, even for explicit piecewise-constant densities~\cite{GoldbergHS2020b}.

A closely related question arises for indivisible goods.
Here, $m$ goods are arranged on a path, for example, as consecutive time slots, and each agent has an additive valuation over the goods.
A contiguous allocation gives each agent a possibly empty set of consecutive goods, and as for the cake, the assignment of these sets to the agents is part of the solution.
An envy-free allocation need not exist, for example, when two agents value a single good, and thus one asks for \emph{envy-freeness up to $c$ goods} (EF$c$): removing at most $c$ goods from any envied bundle eliminates the envy.
Contiguous EF1 allocations always exist~\cite{Igarashi2023}.
Bil\`o et al.~\cite{Bilo+2019} posed the complexity of computing contiguous EF1 or EF2 allocations as an open problem.
For chores, which the agents want to avoid, contiguous EF1 allocations also exist~\cite{SunL2026,BiloLV2026}, and Bil\`o, Loebl, and Vinci~\cite{BiloLV2026} conjectured that computing them is PPAD-complete.

\subsection{Our Results}
We answer these questions.

\begin{theorem}\label{thm:main}
Computing a contiguous $\varepsilon$-envy-free allocation for normalized rational piecewise-constant additive valuations is PPAD-complete.
The hardness holds even when $\varepsilon$ is given in unary, and even when every density takes only the values $\lambda$ and $2\lambda$ for some $\lambda>0$ common to all agents and has at most six breakpoints.
\end{theorem}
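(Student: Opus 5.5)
Membership in PPAD follows from Deng, Qi, and Saberi~\cite{DengQS2012}: for explicitly given valuations, computing an $\varepsilon$-envy-free contiguous allocation is in PPAD, and piecewise-constant densities can be evaluated exactly in polynomial time. The work is the hardness part. I plan to reduce from the problem the paper calls \TG{}, computing an approximate Nash equilibrium of a threshold game. This problem is PPAD-hard for inverse-polynomial approximation, which is what lets $\varepsilon$ be given in unary. In a threshold game, each player $v$ picks $x_v\in[0,1]$, and in equilibrium $x_v\approx 0$ if $\sum_{u\in N(v)} x_u>1/2$ and $x_v\approx 1$ if the sum is below $1/2$ (up to a scaling, in the version I would use).

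\textbf{Construction.} The cake is split into consecutive \emph{gadget regions}, one per game player, plus padding regions. Each agent's density is $\lambda$ everywhere except on at most three sub-intervals where it is $2\lambda$. This gives at most six breakpoints, and normalization fixes $\lambda$.

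The key idea, flagged in the abstract, is a \emph{price} argument. Since every agent has baseline density $\lambda$, any agent's value of a piece $X$ equals $\lambda|X|$ plus a bonus $\lambda|X\cap B_i|$, where $B_i$ is the agent's doubled region. I would show that in any $\varepsilon$-EF allocation all pieces have nearly equal "price", meaning length up to $O(\varepsilon/\lambda)$, no matter which agent receives which piece. Envy-freeness between agents whose bonus regions are far from the relevant pieces compares pure lengths. With enough agents ("price agents") whose bonus sets are spread so that each piece is non-bonus for most of them, all lengths are forced within $O(\varepsilon)$ of $1/n$ up to bonus corrections.

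Once lengths are pinned, the cut positions form a rigid grid, and each cut can only drift by small amounts. These drifts encode the strategies $x_v$. A gadget agent for player $v$ gets bonus intervals located near the cuts of its neighbours' gadgets. Its preference between its own piece and an adjacent piece then compares the shift of its own cut against the total overlap with its neighbours' shifts. This reproduces the threshold best response up to a polynomial scaling. Hence every $\varepsilon$-EF allocation decodes, by reading off cut offsets, to an $\varepsilon'$-equilibrium with $\varepsilon'=\mathrm{poly}(n)\,\varepsilon$.

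\textbf{Main obstacle.} The hard part is that the assignment of pieces to agents is free. Gadget agents might take pieces other than the intended ones, which would scramble the decoding. So the price-equalization lemma has to be proved for \emph{every} assignment. The difficulty is that a nonlocal neighbourhood structure must be encoded on a line using only local bonus windows.

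My first attempt would use an averaging argument. Summing the EF inequalities $V_i(X_i)\ge V_i(X_j)-\varepsilon$ over a perfect matching of agents to pieces, and using that the bonus mass per agent is small ($\lambda\cdot$ bonus length) relative to total mass, bounds the length variance. After that, an agent's piece identity is determined up to a bounded permutation, which the gadget can tolerate by symmetry. Chores follow by negating the values and using the same construction.
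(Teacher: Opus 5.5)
There is a genuine gap exactly at the step you flag as the main obstacle: the mechanism you propose for equalizing prices cannot work. Write $b_i(P)=U_i(P)-|P|$ for agent $i$'s bonus on a piece $P$.

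If you sum the inequalities $U_i(P_{\sigma(i)})\ge U_i(P_{\tau(i)})-\eta$ over any perfect matching $\tau$, the length terms cancel, because $\sum_i\ell_{\sigma(i)}=\sum_i\ell_{\tau(i)}=L$. All that survives is $\sum_ib_i(P_{\sigma(i)})\ge\sum_ib_i(P_{\tau(i)})-N\eta$, which says nothing about the length variance.

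Your ``price agents'' do not fix this. An agent comparing pure lengths only certifies that the piece \emph{it receives} is nearly longest. A piece received by an agent with bonus $b$ on it may be up to $b$ shorter than the others. One-sided comparisons of this kind give only the coarse localization of Lemma~\ref{lem:localization}. If each agent had bonus only on its intended piece, the identity assignment would satisfy all constraints for any lengths within the bonus size, so nothing would be forced.

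The missing idea is a two-sided, assignment-oblivious argument. Each agent $i$ gets bonuses in its own slot and in a next slot $\pi(i)$, where $\pi$ is a \emph{single} $N$-cycle. The bonus lengths, including the boundary bonuses on $J_v$, are tuned as follows. Take prices $q_j=\ell_j+z_v$ for $j\in S_v$ and $q_j=\ell_j$ otherwise. Then agent $i$ is exactly indifferent between $P_i$ and $P_{\pi(i)}$ and weakly prefers both to every other piece (Lemma~\ref{lem:envelope}). This gives $q_{\sigma(i)}\ge\max\{q_i,q_{\pi(i)}\}-\eta$.

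It is the maximum over two pieces, not a single matching, that makes summation informative. Summing yields $\sum_i\max\{0,q_{\pi(i)}-q_i\}\le N\eta$. Because $\pi$ is one cycle, this total upward variation bounds $\max_jq_j-\min_jq_j$ (Lemma~\ref{lem:cycle}).

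The assignment is never pinned down. Once prices are nearly equal, both the identity and $\pi$, which jumps across the whole cake, give nearly envy-free assignments for the same cuts. So your claim that piece identities are determined up to a bounded permutation is neither true here nor needed.

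Your decoding step has a related problem. Bonus intervals placed near the neighbours' cuts lie inside the neighbours' pieces. They therefore do not enter an agent's comparison between its own piece and an adjacent one. Non-envy toward the neighbours' pieces again gives only one-sided constraints, which cannot implement a threshold best response.

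In the paper, no agent looks at its neighbours; the graph is encoded globally through the spatial placement of the index sets. By~\ref{S:counts}, the prefix $[a_u]$ contains $12A_{uv}$ more indices of $S_v$ than average. Once prices are equal, each such piece is shorter by about $z_v$, so $\theta_{a_u}\approx a_uB-12(Az)_u$ (Lemma~\ref{lem:decoding}). The decoder~\eqref{eq:decoder} then reads off an approximate best response from the position of this cut relative to $J_u$.

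This is also why a constant-$\delta$ source, $1/12$-\TG{}, suffices. The unary $\varepsilon=1/(8000N^4)$ comes from the polynomial size of the construction, not from inverse-polynomial hardness of the source.
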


Theorem~\ref{thm:main} holds for both encodings of $\varepsilon$: hardness for unary $\varepsilon$ implies hardness for binary $\varepsilon$, and membership holds even for binary $\varepsilon$ by Lemma~\ref{lem:membership}.
In the hard instances of Theorem~\ref{thm:main}, the number $n$ of agents is part of the input, and $\varepsilon=\Theta(n^{-4})$.
Thus, unless $\mathrm{PPAD}\subseteq\mathrm{FP}$, no algorithm computes such allocations in time polynomial in the input length and $1/\varepsilon$, even under the stated density restrictions.
These restricted instances nevertheless admit an efficient polynomial-time approximation scheme for additive envy, as explained in Section~\ref{sec:conclusion}.
Since an envy-free allocation is $\varepsilon$-envy-free for every $\varepsilon>0$, Theorem~\ref{thm:main} and Lemma~\ref{lem:membership} also show that computing an exact contiguous envy-free allocation is PPAD-complete (Corollary~\ref{cor:exact}).

To transfer this hardness to indivisible goods, we place goods at the quantiles of each agent's valuation, each valued only by that agent, so that every contiguous EF$c$ allocation of the goods yields an approximately envy-free allocation of the cake.
Membership follows by rounding an exact envy-free division of an auxiliary cake with piecewise-linear virtual valuations~\cite{BiloLV2026,DupreLaTourI2026,FilosRatsikasHHH2026}.
We write EF$c_{\mathrm{outer}}$ for the stronger notion of Bil\`o et al.~\cite{Bilo+2019}, which requires the removed goods to leave the envied bundle connected or empty.

\begin{theorem}\label{thm:discrete}
For every fixed integer $c\ge1$, computing a contiguous EF$c$ allocation of explicitly given goods on a path is PPAD-complete for nonnegative rational additive valuations.
The hardness holds even for binary valuations in which every good is valued by exactly one agent and all agents value the same number of goods.
Both statements also hold for EF$c_{\mathrm{outer}}$ in place of EF$c$, and for $c=\lfloor m^\alpha\rfloor$ for every fixed rational $0<\alpha<1$, where $m$ is the number of goods.
\end{theorem}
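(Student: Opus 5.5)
The proof has two directions. For hardness we reduce from Theorem~\ref{thm:main}, using the quantile construction sketched above. For membership we reduce to exact envy-free division of an auxiliary cake with piecewise-linear valuations.

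\textbf{Hardness.} Start from a hard instance of Theorem~\ref{thm:main}: $n$ agents, densities taking the values $\lambda$ and $2\lambda$ with at most six breakpoints, and $\varepsilon=\Theta(n^{-4})$. Fix an integer $K$, polynomial in $n$, to be chosen below.

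\begin{enumerate}[label=(\roman*)]
\item For each agent $i$, compute the points $q_{i,1}<\dots<q_{i,K}$ with $V_i([0,q_{i,k}])=(k-\tfrac12)/K$. These are rational, since the densities are piecewise constant with rational data.
\item Place a good at every $q_{i,k}$, valued $1$ by agent $i$ and $0$ by everyone else. Order all goods along the path by position, breaking ties by agent index. There are $m=nK$ goods, and the valuations are binary with every agent valuing exactly $K$ goods, as the statement requires.
\item Map a contiguous EF$c$ allocation of the goods to a cake allocation by cutting at midpoints between consecutive goods that lie in different bundles. Empty bundles become degenerate intervals placed at some cut.
\end{enumerate}

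The key estimate is that for any interval $I$ whose endpoints lie between goods, the number of $i$-goods in $I$ differs from $K\cdot V_i(I)$ by at most $1$. This holds because the $i$-goods sit at the $i$-quantiles.

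Now take the resulting cake allocation. If agent $i$ envies $X_j$ by more than $\varepsilon$, then $X_j$ contains at least $K(V_i(X_i)+\varepsilon)-1$ of the $i$-goods, while $X_i$ contains at most $K\,V_i(X_i)+1$ of them. Removing $c$ goods from $X_j$ still leaves envy once $K\varepsilon>c+2$. So choosing $K=\lceil (c+3)/\varepsilon\rceil$ turns every EF$c$ allocation into an $\varepsilon$-envy-free cake allocation. This $K$ is polynomial because $\varepsilon$ is given in unary.

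For EF$c_{\mathrm{outer}}$, note that it implies EF$c$, so the same reduction works.

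For $c=\lfloor m^\alpha\rfloor$ with $m=nK$, we need $K\varepsilon>m^\alpha+2$, that is, $K^{1-\alpha}\gtrsim n^\alpha/\varepsilon$. Taking $K=(n^{\alpha}/\varepsilon)^{1/(1-\alpha)}$ times a constant suffices, and this is polynomial since $\alpha<1$ is fixed. To keep $\lfloor m^\alpha\rfloor$ exactly matched to the chosen $K$, one can pad with goods that no one values.

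\textbf{Membership.} Build an auxiliary cake in which each good becomes a unit-length segment. On it, give each agent a piecewise-linear "virtual" valuation following~\cite{BiloLV2026,DupreLaTourI2026,FilosRatsikasHHH2026}, where a partially included good counts in a way that lets fractional cuts be rounded. Computing an exact envy-free division of such a cake lies in PPAD, as in those works via Lemma~\ref{lem:membership}-style arguments with rational breakpoints. Then round each cut to an integer position and argue that the rounding loses at most one good per endpoint, which yields EF$1$, hence EF$c$, and the outer variant as well.

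\textbf{Main obstacle.} I expect membership to be the main obstacle. Exactness requires the solution to be rational with polynomial bit size, and the rounding must respect the outer (connectedness) requirement. I would first try to cite the cited works directly for EF$1_{\mathrm{outer}}$ and note that it implies all the weaker variants.
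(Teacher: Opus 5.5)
Your hardness half matches the paper's proof (Lemma~\ref{lem:quantile} and the choice of $k$ for $c=\lfloor m^\alpha\rfloor$) and is fine. You should drop the padding remark, though. Since $m=nK$ exactly, you only need $\lfloor (nK)^\alpha\rfloor+2\le K\varepsilon$. Padding with goods that nobody values would also raise $c$ and violate the requirement that every good be valued by exactly one agent.

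The genuine gap is in membership, in two places.

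\emph{The rounding step.} You round the cuts so that each bundle gains or loses at most one good per endpoint, and conclude EF1. If partially included goods count proportionally, this fails. EF$c$ only lets you remove goods from the \emph{envied} bundle, so nothing compensates a loss at the ends of agent $i$'s \emph{own} bundle. For instance, let agent $i$'s fractional piece be $[0.6,2.4]$ over goods worth $10,1,10$ to it, so its fractional value is $9$. Nearest-integer rounding leaves it $\{g_2\}$, of value $1$. If another bundle consists of nine goods worth $1$ each, agent $i$ has no fractional envy toward it, but after rounding only the removal of eight goods eliminates the envy. What makes the rounding work is the specific virtual valuation $\tilde w_i$ of Dupr\'e la Tour and Igarashi together with its rounding map $\mathcal R$. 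Here $\tilde w_i$ is an affine interpolation, on a Kuhn triangulation of step $\frac13$, of medians of $w_i$ over three sets of consecutive items. The guarantee of the pair is the sandwich $w_i^-(A_j)\le\tilde w_i(F_j)\le w_i^+(A_j)$. For goods $w_i^+(A)=\nu_i(A)$, so the own bundle never loses relative to $\tilde w_i$. Meanwhile $w_i^-$ removes a single end good of the envied bundle, which gives EF1$_{\mathrm{outer}}$.

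\emph{The exact division step.} An exact envy-free division of the auxiliary cake is not a Lemma~\ref{lem:membership}-type fact, since $\tilde w_i$ is not additive in general. You need the Filos-Ratsikas--Hansen--H{\o}gh--Hollender result for piecewise-linear arithmetic circuits. In the form used here, it requires values that vanish on degenerate intervals and are strictly positive on all others. That fails as soon as some goods have value $0$, which is exactly the binary setting. The paper handles this in Lemma~\ref{lem:discrete-membership}:
\begin{itemize}
\item it scales the values to integers;
\item it works with $\tilde w_i(F)+|F|/(2m)$, which is strictly positive on nondegenerate intervals;
\item it uses integrality of $w_i^{\pm}$ to absorb the resulting slack of $\frac12$.
\end{itemize}
You need this or an equivalent device.

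Finally, for non-constant $c=\lfloor m^\alpha\rfloor$, PPAD membership of the search problem also requires that EF$c$ and EF$c_{\mathrm{outer}}$ be checkable in polynomial time. This holds by removing the $c$ most valuable goods, respectively by enumerating prefix/suffix pairs of total size at most $c$, but your proposal does not address it.
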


Theorems~\ref{thm:main} and~\ref{thm:discrete} and Corollary~\ref{cor:exact} extend to chores, where each agent prefers pieces of small cost, under the same restrictions (Proposition~\ref{prop:chores}).

\subsection{Overview of the Reduction}\label{sec:overview}
The additive case is difficult for two reasons: the value of a piece cannot depend on how the rest of the cake is divided, unlike the preferences in the hardness proof of Deng, Qi, and Saberi~\cite{DengQS2012}, and a solution of the source problem must be recovered from every approximately envy-free allocation, whatever the assignment of the pieces to the agents.

We reduce from finding an approximate equilibrium of a threshold game, which is PPAD-hard~\cite{DeligkasFHM2024}.
In such a game, each player $v$ chooses a value $z_v\in[0,1]$ and prefers $1$ if the total value of its in-neighbors is below $1/2$ and $0$ if it is above $1/2$.
We work with an unnormalized cake $[0,NB]$, which we divide into $N$ slots of length $B=\Theta(N)$, one for each agent, and we represent each $z_v$ by the position of one cut.
Since every agent values each piece at nearly its length, approximate envy-freeness places the $j$th piece $P_j$ near the $j$th slot (Lemma~\ref{lem:localization}), but it determines the lengths too coarsely to decode $z$.

Our key ingredient is to give each piece a price.
For disjoint sets $S_1,\ldots,S_s$ of indices, one for each player, let $q_j=\ell_j+z_v$ if $j\in S_v$ and $q_j=\ell_j$ otherwise, where $\ell_j$ is the length of $P_j$.
We choose the valuations so that, with utility given by value minus price, every agent $i$ weakly prefers both the piece $P_i$ in its own slot and the piece $P_{\pi(i)}$ to every other piece, where $\pi$ is a fixed cyclic order of all indices (Lemma~\ref{lem:envelope}).
If agent $i$ receives $P_{\sigma(i)}$ for a permutation $\sigma$ in an $\eta$-envy-free allocation, then $q_{\sigma(i)}\ge\max\{q_i,q_{\pi(i)}\}-\eta$, and since $\sum_iq_{\sigma(i)}=\sum_iq_i$, summing over all agents gives
\[
 \sum_{i\in[N]}\max\{0,\,q_{\pi(i)}-q_i\}=\sum_{i\in[N]}\bigl(\max\{q_i,q_{\pi(i)}\}-q_i\bigr)\le N\eta.
\]
The left-hand side is the total increase of the prices along $\pi$, and since $\pi$ is a single cycle, all prices lie within $N\eta$ of each other (Lemma~\ref{lem:cycle}).
If each agent preferred only the piece in its own slot, the resulting inequalities $q_{\sigma(i)}\ge q_i-\eta$ would hold trivially for the identity assignment.

Hence, every piece indexed by $S_v$ is shorter by about $z_v$ than a piece in no set.
We arrange the sets so that the cut representing player $u$ is preceded by a fixed number of additional pieces indexed by $S_v$, compared with the average, exactly when $v$ is an in-neighbor of $u$.
This cut thus moves to the left in proportion to the total value of the in-neighbors of $u$ (Lemma~\ref{lem:decoding}), and reading off $z_u$ from its position gives an approximate equilibrium.
For chores, we use the costs $3|F|-U_i(F)$, under which every piece indexed by $S_v$ becomes longer by about $z_v/2$.
Reversing the arrangement of the sets and doubling the excess counts then gives the same shift of the cuts (Appendix~\ref{app:chores}).

\subsection{Related Work}
PPAD~\cite{Papadimitriou1994} is a complexity class of total search problems whose solutions are guaranteed to exist by a parity argument on directed graphs.
It contains computational versions of Sperner's lemma, which yields the existence of contiguous envy-free allocations~\cite{Su1999}, and of Brouwer's fixed-point theorem, and computing a Nash equilibrium is PPAD-complete~\cite{DaskalakisGP2009,ChenDT2009}.
Deng, Qi, and Saberi~\cite{DengQS2012} proved PPAD membership for approximate contiguous cake cutting and PPAD-hardness for general preferences.
Gao, Roghani, Rubinstein, and Saberi~\cite{GaoRRS2024} proved that, for general continuous valuations and any constant number $n\ge3$ of agents, it is PPAD-complete to find an allocation that makes three of the agents $\varepsilon$-envy-free, even if each agent may receive a constant number of pieces.

Hollender and Rubinstein~\cite{HollenderR2025} gave an efficient algorithm for four monotone agents and showed PPAD-hardness for four non-monotone agents, and Hollender, Maystre, and Risse~\cite{HollenderMR2026} showed intractability already for three non-monotone agents.
For five monotone agents, the complexity remains open~\cite{HollenderR2025,HollenderMR2026}.
These results concern a constant number of agents, whereas in our hard instances, the number of agents is part of the input.
This is necessary unless $\mathrm{PPAD}\subseteq\mathrm{FP}$, since for explicit piecewise-constant additive valuations, the exact problem is polynomial-time solvable for every fixed number of agents~\cite[Theorem~5]{Alijani+2017}.

Goldberg, Hollender, and Suksompong~\cite{GoldbergHS2020b} proved NP-hardness of several decision variants, including the one with a fixed agent order, and showed that the unary-precision problem with piecewise-constant valuations is equivalent to exact envy-freeness for binary valuations in which every good is valued by exactly one agent.
They also computed a contiguous $1/3$-envy-free allocation in polynomial time.
Barman and Kulkarni~\cite{BarmanK2023} improved the additive guarantee to $1/4+o(1)$ and gave an algorithm polynomial in $n$ and $1/\varepsilon$ when the number of distinct valuations is at most $\varepsilon n-1$.

Threshold games were introduced by Papadimitriou and Peng~\cite{PapadimitriouP2021} for public goods games in directed networks and were also used for the Hylland--Zeckhauser scheme~\cite{ChenCPY2022}.
Consensus halving, another fair-division problem with a topological existence proof, is PPA-complete~\cite{FilosRatsikasG2018,FilosRatsikasG2019}, even for constant precision~\cite{DeligkasFHM2025} and for piecewise-uniform valuations with two blocks per agent~\cite{FilosRatsikasHSZ2023}.

For goods on a path, Suksompong~\cite{Suksompong2019} proved that contiguous EF1 allocations exist for binary valuations.
Kawase, Roy, and Sanpui~\cite{KawaseRS2026} proved that deciding the existence of an EF1 allocation is NP-complete for a fixed agent order, whereas Theorem~\ref{thm:discrete} shows that computing one is PPAD-complete for a flexible order.

\section{Preliminaries}\label{sec:preliminaries}
We write $[k]=\{1,\ldots,k\}$, $|F|$ for the length of an interval $F$, and $|S|$ for the number of elements of a finite set $S$.
For a real number $t$, let $\clip(t)=\min\{1,\max\{0,t\}\}$.

\subsection{Contiguous cake cutting}\label{sec:cake}
An instance of our problem consists of $n$ agents and a positive rational $\varepsilon$.
Each agent $i$ has a normalized additive valuation $V_i$ on the cake $[0,1]$ given by a piecewise-constant density $\rho_i$ via
\[
 V_i([a,b])=\int_a^b\rho_i(t)\,dt,
\]
where $\rho_i$ is given by finitely many rational breakpoints in $[0,1]$ and nonnegative rational values on the intervals between them, all encoded in binary, and $V_i([0,1])=1$.
The precision $\varepsilon$ is encoded in binary unless we explicitly restrict it to unary encoding.
In the unary encoding, $\varepsilon=p/q$ for positive integers $p$ and $q$ written in unary, and hence $1/\varepsilon\le q$ is at most the input length.

A \emph{contiguous allocation} of an interval $[0,\Lambda]$ to $n$ agents consists of cuts $0=\theta_0\le\theta_1\le\cdots\le\theta_n=\Lambda$ and a bijection $\sigma\colon[n]\to[n]$.
The cuts divide $[0,\Lambda]$ into the \emph{pieces} $[\theta_{j-1},\theta_j]$ for $j\in[n]$, and agent $i$ receives the $\sigma(i)$th piece
\[
 X_i=[\theta_{\sigma(i)-1},\theta_{\sigma(i)}].
\]
Pieces may have length zero, and adjacent pieces share only an endpoint, which has zero value.
The allocation is \emph{$\varepsilon$-envy-free} if $V_i(X_i)\ge V_i(X_j)-\varepsilon$ for all $i,j$, and \emph{envy-free} if this holds for $\varepsilon=0$.
We use these definitions also for the unnormalized valuations on $[0,\Lambda]$ that appear in our reductions.

We call the problem of finding an $\varepsilon$-envy-free contiguous allocation of $[0,1]$ with rational cuts the \emph{\Cake{} problem}.\footnote{Its restriction to unary-encoded $\varepsilon$ is called \textsc{unary-$\varepsilon$-EF-Cake-Cutting} by Goldberg, Hollender, and Suksompong~\cite[Definition~6.3]{GoldbergHS2020b}.}
Every instance of this problem has a polynomial-size solution by Lemma~\ref{lem:membership} below, and solutions can be verified in polynomial time, since values at rational cuts are computable exactly.

A \emph{search reduction} maps each source instance to a target instance, and \emph{every} target solution back to a source solution, in polynomial time.
PPAD is a class of total search problems, that is, problems in which every instance has a polynomial-size solution that can be verified in polynomial time.
It consists of the problems that reduce to \textsc{End-of-Line}~\cite{Papadimitriou1994}.
Filos-Ratsikas, Hansen, H{\o}gh, and Hollender~\cite{FilosRatsikasHHH2026} showed that computing an envy-free contiguous allocation lies in PPAD for valuations given by piecewise-linear arithmetic circuits, which include the integrals of explicit piecewise-constant densities.
For $\varepsilon$-envy-freeness, membership also follows from earlier work~\cite{DengQS2012,HollenderR2025}.
Since an envy-free allocation is $\varepsilon$-envy-free for every $\varepsilon\ge0$, we obtain the following.

\begin{lemma}[\cite{DengQS2012,HollenderR2025,FilosRatsikasHHH2026}]\label{lem:membership}
For explicit rational piecewise-constant valuations, computing a contiguous envy-free allocation belongs to PPAD, and so does computing a contiguous $\varepsilon$-envy-free allocation, even when $\varepsilon$ is given in binary.
In particular, the \Cake{} problem belongs to PPAD.
\end{lemma}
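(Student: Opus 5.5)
The lemma is essentially a citation, so the plan is to reduce it to the cited membership result for exact envy-freeness and then observe that the approximate versions inherit it. First, take the result of Filos-Ratsikas, Hansen, H{\o}gh, and Hollender~\cite{FilosRatsikasHHH2026}: finding an exact contiguous envy-free allocation is in PPAD when the valuations are given by piecewise-linear arithmetic circuits. The instance of \Cake{} must therefore be placed in this form. For an explicit rational piecewise-constant density $\rho_i$ with breakpoints $0=b_0<b_1<\cdots<b_k=1$ and values $c_1,\ldots,c_k$, the cumulative function $F_i(x)=V_i([0,x])=\sum_{r}c_r\bigl(\min\{x,b_r\}-\min\{x,b_{r-1}\}\bigr)$ is continuous and piecewise linear. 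It is computed by a circuit of polynomial size that uses only $\min$, addition, and multiplication by rational constants. Then $V_i([a,b])=F_i(b)-F_i(a)$, which is again such a circuit in the cut positions. This gives a polynomial-time reduction of the exact problem to the problem of \cite{FilosRatsikasHHH2026}, so the exact problem lies in PPAD.

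Second, a PPAD membership proof for piecewise-linear circuits produces solutions with rational coordinates of polynomial bit size. This is because the solution is a vertex of a linear region determined by the circuit. Hence the exact problem is a total search problem with polynomial-size rational witnesses. Its solutions can be verified by evaluating each $V_i$ exactly at the rational cuts and comparing the $n^2$ values.

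Third, any instance $(V_1,\ldots,V_n,\varepsilon)$ of the $\varepsilon$-envy-free problem, with $\varepsilon$ in binary, maps to the exact instance by simply dropping $\varepsilon$. An exact solution is returned unchanged, since $V_i(X_i)\ge V_i(X_j)\ge V_i(X_j)-\varepsilon$. This is a valid search reduction whatever the encoding of $\varepsilon$, so the $\varepsilon$-version and \Cake{} are in PPAD as well. An alternative route is Deng, Qi, and Saberi~\cite{DengQS2012}, or Hollender and Rubinstein~\cite{HollenderR2025}, which use a Sperner-type reduction for $\varepsilon$-envy-freeness. That route needs a triangulation with mesh polynomial in $1/\varepsilon$ and in the Lipschitz constants $\max_i\max_r c_r$. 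These quantities have binary size, which is still polynomial in the bits of a PPAD circuit.

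The only genuinely delicate point is the second step: the existence of polynomial-size rational exact solutions. This must come from the cited theorem rather than from a compactness argument. If the cited statement lacked it, I would fall back on the following argument. Fix the linear pieces in which each cut lies and fix the assignment $\sigma$. The envy-free conditions then form a linear feasibility system with polynomially bounded coefficients, and any feasible system of this kind has a vertex solution of polynomial bit size.
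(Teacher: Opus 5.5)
Your proposal is essentially the paper's argument. You write $V_i([a,b])$ as a difference of two values of the piecewise-linear cumulative function, invoke \cite{FilosRatsikasHHH2026} for exact envy-freeness, and pass to the $\varepsilon$-version because an envy-free allocation is $\varepsilon$-envy-free for every $\varepsilon\ge0$, with \cite{DengQS2012,HollenderR2025} as the alternative route.

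One caveat applies equally to the paper's one-line citation. Before Lemma~\ref{lem:discrete-membership}, the paper itself states the cited theorem under the condition $U_i([a,b])>0$ for $a<b$, and this fails when a density vanishes on an interval. In that case, apply the theorem to $V_i(X)+\delta|X|$ instead. Taking $\delta=\varepsilon$ already suffices for the $\varepsilon$-version. For the exact version, take $\delta=W^{-20Kn}$ and round with \cite[Lemma~6.2]{GoldbergHS2020b}, exactly as in the paper's membership proof for chores in Proposition~\ref{prop:chores}.
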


In \emph{chore division}, the cake is a chore that every agent wants to avoid: the densities $\rho_i$ are nonpositive and $V_i([0,1])=-1$, and we write $C_i=-V_i$ for the cost of agent $i$.
The definitions above carry over, also to unnormalized valuations, and hence an allocation is $\varepsilon$-envy-free if $C_i(X_i)\le C_i(X_j)+\varepsilon$ for all $i,j$.
We consider chore division only in Section~\ref{sec:chores} and Appendix~\ref{app:chores}.

\subsection{Indivisible items on a path}\label{sec:items}
An instance of the discrete problem consists of items $g_1,\ldots,g_m$ on a path, in this order, and $n$ agents.
The items are either \emph{goods}, for which agent $i$ has an additive valuation $\nu_i$ that assigns each good a nonnegative rational value, or \emph{chores}, for which agent $i$ has an additive cost $\kappa_i$ that assigns each chore a nonnegative rational cost.
The valuations or costs are \emph{binary} if every item has value or cost $0$ or $1$.
A \emph{contiguous allocation} is given by indices and a bijection $\sigma\colon[n]\to[n]$,
\[
 0=\beta_0\le\beta_1\le\cdots\le\beta_n=m,\qquad
 X_i=\{g_k:\beta_{\sigma(i)-1}<k\le\beta_{\sigma(i)}\},
\]
where $X_i$ is the bundle of agent $i$.
For an integer $c\ge0$, an allocation of goods is \emph{EF$c$} if for all $i,j\in[n]$ there is a set $G\subseteq X_j$ with $|G|\le c$ and $\nu_i(X_i)\ge\nu_i(X_j\setminus G)$, and an allocation of chores is \emph{EF$c$} if for all $i,j\in[n]$ there is a set $G\subseteq X_i$ with $|G|\le c$ and $\kappa_i(X_i\setminus G)\le\kappa_i(X_j)$.
In both cases, the allocation is \emph{EF$c_{\mathrm{outer}}$} if, in addition, $G$ can be chosen so that $X_j\setminus G$ for goods and $X_i\setminus G$ for chores is connected or empty~\cite{Bilo+2019,Igarashi2023}.

\subsection{Threshold games}\label{sec:source}
We reduce from threshold games, which were introduced by Papadimitriou and Peng~\cite{PapadimitriouP2021}.
A threshold game is given by a directed graph with vertex set $[s]$, whose vertices are players.
Each player $v$ chooses a strategy $z_v\in[0,1]$.
Let $A\in\{0,1\}^{s\times s}$ be the matrix with $A_{vu}=1$ if and only if $(u,v)$ is an arc.
Then, $(Az)_v$ is the sum of the strategies of the in-neighbors of $v$, and player $v$ prefers strategy $1$ if $(Az)_v<1/2$ and strategy $0$ if $(Az)_v>1/2$.
For $\delta\ge0$, a vector $z\in[0,1]^s$ is a \emph{$\delta$-approximate equilibrium} if, for every $v\in[s]$,
\[
 (Az)_v<\tfrac12-\delta\implies z_v\ge1-\delta,
 \qquad
 (Az)_v>\tfrac12+\delta\implies z_v\le\delta.
\]
We call the problem of finding a $\delta$-approximate equilibrium the \emph{$\delta$-\TG{} problem}.
Deligkas, Fearnley, Hollender, and Melissourgos~\cite{DeligkasFHM2024} showed that this problem is PPAD-complete below the threshold $1/6$ and solvable in polynomial time at $1/6$.

\begin{theorem}[{\cite[Theorem~4.23]{DeligkasFHM2024}}]\label{thm:source}
The $\delta$-\TG{} problem is PPAD-complete for every constant $0\le\delta<1/6$, even when every vertex has indegree and outdegree at most two.
\end{theorem}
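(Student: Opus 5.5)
Since Theorem~\ref{thm:source} is quoted from~\cite{DeligkasFHM2024}, I would reprove it by establishing membership and hardness separately. For \textbf{membership}, I would define the map $f\colon[0,1]^s\to[0,1]^s$ by $f(z)_v=\clip\bigl(z_v+\tfrac12-(Az)_v\bigr)$. If $z$ is a fixed point and $(Az)_v<1/2$, then $z_v=\clip(z_v+\text{positive})$ forces $z_v=1$. Symmetrically, $(Az)_v>1/2$ forces $z_v=0$. So fixed points are exact equilibria, which are $\delta$-approximate for every $\delta\ge0$. The map $f$ is piecewise linear with rational coefficients, and exact fixed points of such maps are computable in PPAD (linear-FIXP${}={}$PPAD, Etessami--Yannakakis). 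This gives membership for every $\delta\ge0$, including $\delta=0$.

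For \textbf{hardness}, I would reduce from \textsc{Pure-Circuit}, which is PPAD-complete with NOR and PURIFY gates. Its values lie in $\{0,1,\bot\}$, and a solution only constrains gates whose inputs are pure. I would interpret a strategy $z_v\le\delta$ as $0$, $z_v\ge 1-\delta$ as $1$, and anything else as $\bot$.

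A player $v$ whose in-neighbours are two players $u,w$ simulates NOR, and the arithmetic behind this is exactly where the constant $1/6$ comes from.
\begin{itemize}[nosep]
\item If both inputs are $0$, then $(Az)_v\le 2\delta$, and $2\delta<\tfrac12-\delta$ holds precisely when $\delta<1/6$. Hence $z_v\ge1-\delta$.
\item If some input is $1$, then $(Az)_v\ge 1-\delta>\tfrac12+\delta$, which holds since $\delta<1/4$. Hence $z_v\le\delta$.
\end{itemize}
Further gadgets are built from the same rule.
\begin{itemize}[nosep]
\item A player with a single in-neighbour is a NOT gate.
\item A player with no in-neighbours is the constant $1$.
\item Fan-out is handled by binary trees of double NOT gates (copies), so that every indegree and outdegree is at most two.
\end{itemize}

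The \textbf{main obstacle} is PURIFY. Given an input $u$, it must output two values that copy $u$ when $u$ is pure, and at least one of which is pure when $u=\bot$. With unit weights and the single threshold $1/2$, all single-input gadgets switch near the same point. My first attempt would create two effective thresholds, which requires a three-way case analysis:
\begin{itemize}[nosep]
\item Output $a=\mathrm{NOT}(\mathrm{NOT}(u))$ switches around $z_u\approx 1/2$.
\item Output $b=\mathrm{NOT}(\mathrm{NOR}(u,u'))$, with $u'$ a copy of $u$, sees $z_u+z_{u'}\approx 2z_u$. It therefore switches around $z_u\approx 1/4$.
\item If $z_u$ lies below the lower window, both outputs are $0$; if above the upper window, both are $1$.
\item If $z_u$ lies in the lower window, then $a$ is pure. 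If it lies in the upper window, then $b$ is pure.
\end{itemize}
The delicate point is that copies only reproduce their input up to $\bot$. I would verify that the uncertainty windows $[\tfrac12-\delta,\tfrac12+\delta]$ and roughly $[\tfrac14-\delta,\tfrac14+\delta]$, propagated through the copy chains, stay disjoint for all $\delta<1/6$. If they do not, I would insert extra NOT stages to sharpen $u$, or use sums of more copies to spread the thresholds further apart, before concluding that every $\delta$-approximate equilibrium decodes to a \textsc{Pure-Circuit} solution.
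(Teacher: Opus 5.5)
The paper does not prove this statement. It imports it from \cite[Theorem~4.23]{DeligkasFHM2024}, so your proposal is a self-contained reproof. Your membership argument (fixed points of the piecewise-linear map $f$ are exact equilibria, and Linear-FIXP${}={}$PPAD) is fine, and so are your NOT, NOR, constant, and fan-out gadgets.

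The gap is the PURIFY gadget, which carries the whole hardness proof. Your output $b=\mathrm{NOT}(\mathrm{NOR}(u,u'))$ relies on $z_{u'}\approx z_u$, but a double-NOT copy does not preserve intermediate values:
\begin{itemize}[nosep]
\item $z_{u'}\le\delta$ whenever $z_u<\frac12-\delta$;
\item $z_{u'}\ge1-\delta$ whenever $z_u>\frac12+\delta$;
\item $z_{u'}$ is unconstrained when $z_u\in[\frac12-\delta,\frac12+\delta]$.
\end{itemize}
So near $z_u\approx\frac14$ the NOR player sees about $z_u$, not $2z_u$, and $b$ switches near $\frac12$ just like $a$. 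The second threshold does not exist, and neither summing more such copies nor sharpening $u$ itself creates one. Two concrete failures:
\begin{itemize}[nosep]
\item If $u'$ is a copy separate from $a$, take $z_u=\frac12$ and $z_{u'}=0$ (the inner NOT of $u'$ at $1$). Set every other gadget node to $\frac12$. This is consistent with $\delta$-approximate equilibrium, and both outputs are $\bot$.
\item Even if $u'$ is the node $a$ itself, for $\delta>0$ take $z_u=\frac12-\delta$, the inner NOT at $\frac12$, and $z_a=\frac32\delta$. The NOR player's in-sum is then $\frac12+\frac\delta2$, inside the window, so again both outputs can be $\bot$.
\end{itemize}

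What does work for every $\delta<1/6$ is a case split rather than two thresholds. Let $v=\mathrm{NOT}(\mathrm{NOT}(u))$, and output $v'=\mathrm{NOT}(\mathrm{NOT}(v))$ and $w=\mathrm{NOT}(\mathrm{NOR}(u,v))$. All in- and outdegrees are at most two.
\begin{itemize}[nosep]
\item Pure inputs are copied to both outputs, using $2\delta<\frac12-\delta$ when $u$ is $0$.
\item If $z_u\notin[\frac12-\delta,\frac12+\delta]$, then $v$, and hence $v'$, is pure.
\item Otherwise, $v'$ is pure unless $z_v\in[\frac12-\delta,\frac12+\delta]$. In that case $z_u+z_v\ge1-2\delta>\frac12+\delta$, so the NOR player is at most $\delta$ and $w\ge1-\delta$ is pure.
\end{itemize}
This uses $\delta<1/6$ a second time. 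The extra double NOT on the first output is exactly what absorbs the case $z_v\in(\delta,2\delta]$ that breaks your gadget.
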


\section{PPAD-hardness for additive cake valuations}\label{sec:hardness}
In this section, we prove Theorem~\ref{thm:main} by a search reduction from the $1/12$-\TG{} problem with indegrees at most two, which is PPAD-hard by Theorem~\ref{thm:source}.
Fix such a game with vertex set $[s]$, and let $A\in\{0,1\}^{s\times s}$ be its matrix as in Section~\ref{sec:source}.
We use only the bound on the indegrees: every row of $A$ has at most two ones.
The construction also allows self-loops, that is, $A_{vv}=1$.

We represent the strategy $z_v$ of each player $v$ by the position of one cut.
Ignoring approximation errors, our construction enforces
\[
 z_v=\clip\bigl(\tfrac{13}{2}-12(Az)_v\bigr)\qquad(v\in[s]),
\]
which is $1$ if $(Az)_v\le11/24$ and $0$ if $(Az)_v\ge13/24$, and hence $z$ is an approximate equilibrium.
To this end, the cut of player $v$ lies about $12(Az)_v$ to the left of a reference point, and $z_v$ is the length of the part to the left of the cut of a unit interval centered $6$ to the left of this point.
As $(Az)_v$ increases from $11/24$ to $13/24$, the cut moves across this interval from its right end to its left end, and hence $z_v$ decreases from $1$ to $0$.
The factor $12$ stretches this window of width $1/12$ to the length $1$ of the interval, and the window lies inside $[5/12,7/12]$ with a margin of $1/24$ on each side, which absorbs the approximation errors.
Section~\ref{sec:construction} gives the construction, and Section~\ref{sec:soundness} shows how to construct an approximate equilibrium from every approximately envy-free allocation and completes the proof.

\subsection{The construction}\label{sec:construction}
From the game, we construct a cake-cutting instance with $N=32(s+1)^2$ agents, identified with $[N]$, and additive valuations $U_1,\ldots,U_N$, which are not normalized, on the cake $[0,L]$, where $L=40N^2$.
The cake consists of $N$ consecutive \emph{slots} of length $B=40N$, where the $j$th slot is $[(j-1)B,jB]$, and hence $L=NB$.
We use the same index set for agents, slots, and pieces.
We first define the sets $S_v$ and a cyclic order on the agents, then the valuations, and finally the target instance of the \Cake{} problem, which is obtained by normalizing the valuations.
We use two orders on the indices: the spatial order, in which the numbers of indices of each set $S_v$ in prefixes of $[N]$ encode $A$, and a cyclic order, which visits the indices of each $S_v$ consecutively and handles arbitrary assignments in Section~\ref{sec:soundness}.

\paragraph{The sets $S_v$ and the cyclic order.}
We construct pairwise disjoint sets $S_1,\ldots,S_s\subseteq[N]$, each of size $R=32(s+1)$, and hence $N=(s+1)R$.
The cut of player $v$ is the cut after the $a_v$th piece, where $a_v=vR$.
The sets encode $A$: among the first $a_u$ indices, $S_v$ has twelve more indices than the average $32u$ exactly when $v$ is an in-neighbor of $u$.
In Section~\ref{sec:soundness}, we show that each piece indexed by $S_v$ is shorter by approximately $z_v$ than a piece in no set, and hence these twelve extra pieces move the cut of player $u$ to the left by approximately $12z_v$.
We construct the sets as follows.
Partition $[N]$ into $s+1$ consecutive bins of $R$ indices, where the $u$th bin is $\{(u-1)R+1,\ldots,uR\}$, and let $A_{0v}=A_{s+1,v}=0$.
Start with a block of $32$ indices for each $S_v$ in each bin, and for every arc $(v,u)$, transfer twelve indices of $S_v$ from the $(u+1)$st bin to the $u$th bin.
For each $u\in[s+1]$, we fill the right end of the $u$th bin with consecutive blocks, one for each $v\in[s]$, where the block for $S_v$ consists of $32+12A_{uv}-12A_{u-1,v}$ indices.
For $u\in[s]$, the block for $S_u$ comes last, and the other blocks appear in increasing order of $v$.
In the last bin, all blocks appear in increasing order of $v$.
The remaining indices at the left end of the bin lie in no set.

By the formula above, every block has at least $32-12=20$ indices, and since every indegree is at most two, the blocks in each bin have at most $32s+24=R-8$ indices in total.
Hence, for $u\in[s]$, the last index $a_u$ of the $u$th bin lies in $S_u$, whose block is nonempty and comes last, and for $u\in[s+1]$, the first index $a_{u-1}+1$ of the $u$th bin lies in no set, where $a_0=0$.
Figure~\ref{fig:example} illustrates the sets on a small example.

We also use a cyclic order of the indices.
Consider the list of all indices
\[
 a_1+1,\ S_1\setminus\{a_1\},\ a_1,\ a_2+1,\ S_2\setminus\{a_2\},\ a_2,\ \ldots,\ a_s+1,\ S_s\setminus\{a_s\},\ a_s,\ \text{the remaining indices},
\]
where the indices of each $S_v\setminus\{a_v\}$ are listed in a fixed order, for example as in Figure~\ref{fig:example}, and the remaining indices are those not listed before, in increasing order.
Let $\pi\colon[N]\to[N]$ map each index to the next index in this list, and the last index to $a_1+1$.
Thus, $\pi$ is a permutation consisting of a single cycle.
For an agent $i$, we call the $\pi(i)$th slot its \emph{next slot}.
Here, $a_v+1$ is the index immediately after $a_v$ in spatial order, whereas $\pi(i)$ is the successor of $i$ in cyclic order.
Both the sets and $\pi$ can be computed in polynomial time.

The sets and the cyclic order have the following properties for all $u,v\in[s]$:
\begin{enumerate}[label=\textup{(S\arabic*)},leftmargin=*]
 \item\label{S:counts} $|S_v|=R$ and $|S_v\cap[a_u]|=32u+12A_{uv}$;
 \item\label{S:arc} along $\pi$, the indices of $S_v$ are visited consecutively and end at $a_v$, the index $a_v+1$ visited just before them and the index $\pi(a_v)$ visited just after them lie in no set, and $\pi(a_v)\ne a_v+1$ and $\pi(a_v+1)\ne a_v$.
\end{enumerate}
Property~\ref{S:counts} follows by summing the block sizes over all bins and over the first $u$ bins, respectively.
Property~\ref{S:arc} follows from the definition of $\pi$: the index $\pi(a_v)$ is $a_{v+1}+1$ if $v<s$ and the first remaining index if $v=s$, which exists since the index $1$ lies in no set and differs from every $a_w+1\ge R+1$, and $\pi(a_v+1)$ is an index of $S_v$ other than $a_v$, which exists since $|S_v|=R\ge2$.

\begin{figure}[t]
\centering
\begin{tikzpicture}[x=0.0265cm,y=0.6cm,font=\small]
 \def\pos{0}
 \foreach \c/\w in {0/32,2/32,3/32,1/32,0/8,1/44,3/44,2/32,0/44,1/20,2/44,3/20,0/44,1/32,2/20,3/32} {
  \ifcase\c \def\col{white}\or \def\col{setone}\or \def\col{settwo}\or \def\col{setthree}\fi
  \pgfmathsetmacro\np{\pos+\w}
  \pgfmathsetmacro\mid{(\pos+\np)/2}
  \filldraw[fill=\col,draw=black!50] (\pos,0) rectangle (\np,1);
  \ifnum\c>0 \node[font=\scriptsize] at (\mid,0.5) {\w};\fi
  \xdef\pos{\np}
 }
 \foreach \b in {0,128,256,384,512} \draw[very thick] (\b,-0.25) -- (\b,1.25);
 \foreach \u in {1,2,3,4} \node at (128*\u-64,1.7) {bin \u};
 \tikzset{
  pimove/.style={black!70,densely dashed,-{Stealth[length=1.5mm]},rounded corners=3pt},
  runline/.style={densely dashed},
  run/.style={line width=3pt,-{Stealth[length=1.6mm,width=2.6mm,inset=0.5mm]}}}
 \def\ra{-1.75}\def\rb{-3.2}\def\rc{-4.65}\def\rd{-6.1}
 \foreach \u in {1,2,3} {
  \draw[black!30,densely dashed] (128*\u,-0.3) -- (128*\u,\rd-0.5);
  \node[anchor=base east,font=\footnotesize,inner sep=2pt] at (128*\u,-0.75) {$a_{\u}$};
  \node[anchor=base west,font=\footnotesize,inner sep=2pt] at (128*\u,-0.75) {$a_{\u}{+}1$};
 }
 \foreach \y/\t/\col in {\ra/{$S_1$}/setonedark,\rb/{$S_2$}/settwodark,\rc/{$S_3$}/setthreedark,\rd/{remaining}/black!60}
  \node[left,text=\col] at (-24,\y) {\t};
 \draw[setonedark,runline] (132,\ra) -- (460,\ra);
 \foreach \a/\b in {136/180,300/320,428/460} \draw[setonedark,run] (\a,\ra) -- (\b,\ra);
 \draw[pimove] (460,\ra) -- (468,\ra) -- (468,\ra-0.45) -- (88,\ra-0.45) -- (88,\ra) -- (95,\ra);
 \draw[setonedark,runline] (96,\ra) -- (124,\ra);
 \draw[setonedark,run] (96,\ra) -- (124,\ra);
 \draw[pimove] (124,\ra) -- (124,\ra-0.8) -- (260,\ra-0.8) -- (260,\rb);
 \draw[settwodark,runline] (260,\rb) -- (480,\rb);
 \foreach \a/\b in {320/364,460/480} \draw[settwodark,run] (\a,\rb) -- (\b,\rb);
 \draw[pimove] (480,\rb) -- (488,\rb) -- (488,\rb-0.45) -- (24,\rb-0.45) -- (24,\rb) -- (31,\rb);
 \draw[settwodark,runline] (32,\rb) -- (252,\rb);
 \foreach \a/\b in {32/64,224/252} \draw[settwodark,run] (\a,\rb) -- (\b,\rb);
 \draw[pimove] (252,\rb) -- (252,\rb-0.8) -- (388,\rb-0.8) -- (388,\rc);
 \draw[setthreedark,runline] (388,\rc) -- (512,\rc);
 \draw[setthreedark,run] (480,\rc) -- (512,\rc);
 \draw[pimove] (512,\rc) -- (520,\rc) -- (520,\rc-0.45) -- (56,\rc-0.45) -- (56,\rc) -- (63,\rc);
 \draw[setthreedark,runline] (64,\rc) -- (380,\rc);
 \foreach \a/\b in {64/96,180/224,364/380} \draw[setthreedark,run] (\a,\rc) -- (\b,\rc);
 \draw[pimove] (380,\rc) -- (380,\rc-0.8) -- (-6,\rc-0.8) -- (-6,\rd) -- (-1,\rd);
 \draw[black!60,runline] (0,\rd) -- (428,\rd);
 \foreach \a/\b in {0/32,262/300,390/428} \draw[black!45,run] (\a,\rd) -- (\b,\rd);
 \draw[black!45,run,-{Stealth[length=1.2mm,width=2.2mm,inset=0.4mm]}] (130,\rd) -- (140,\rd);
 \draw[pimove] (428,\rd) -- (436,\rd) -- (436,\rd-0.5) -- (-16,\rd-0.5) -- (-16,\ra+0.55) -- (132,\ra+0.55) -- (132,\ra);
 \foreach \x/\y/\col in {124/\ra/setonedark,132/\ra/black!45,252/\rb/settwodark,260/\rb/black!45,380/\rc/setthreedark,388/\rc/black!45} \fill[\col] (\x,\y) circle (2.6pt);
\end{tikzpicture}
\caption{The sets $S_v$ for the game with arcs $(1,2)$, $(2,3)$, and $(3,2)$, where $s=3$, $R=128$, and $N=512$. Here, $A_{21}=A_{23}=A_{32}=1$, and all other entries of $A$ are $0$. Blue, orange, and green blocks consist of the indices of $S_1$, $S_2$, and $S_3$, respectively, and are labeled with their sizes. White blocks consist of indices in no set. The rows below trace the cyclic order~$\pi$, starting from $a_1+1$: thick arrows run through the indices of a set, and dashed arrows lead to the next index along~$\pi$.}
\label{fig:example}
\end{figure}

\paragraph{Valuations.}
The valuation $U_i$ of each agent $i$ has a density $f_i$, which is $1$ everywhere, plus bonuses of height $1$ on at most three short intervals.
The \emph{core} of the $j$th slot and the \emph{boundary interval} of player $v$ are
\[
 I_j=[(j-1)B+10N,\ jB-10N]\quad(j\in[N]),\qquad
 J_v=[a_vB-\tfrac{13}{2},\ a_vB-\tfrac{11}{2}]\quad(v\in[s]).
\]
Each core lies in the middle of its slot and has length $20N$.
The boundary interval $J_v$ has length $1$ and lies in the $a_v$th slot, near its right end $a_vB$ (Figure~\ref{fig:boundary}).

By~\ref{S:arc}, the cycle $\pi$ enters $S_v$ just after agent $a_v+1$ and leaves it at agent $a_v$, the two agents whose slots lie on either side of the cut of player $v$.
On top of density $1$, agent $i$ has additional density $1$ on an interval centered in the core $I_i$ of its own slot, on an interval centered in the core $I_{\pi(i)}$ of its next slot, and, if $i\in\{a_v,a_v+1\}$, on the boundary interval $J_v$.
These \emph{bonuses} have the following lengths:
\begin{center}
\setlength{\tabcolsep}{12pt}
\begin{tabular}{lccc}
 \toprule
 agent $i$ & in $I_i$ & in $I_{\pi(i)}$ & on $J_v$\\
 \midrule
 $a_v+1$ & $2$ & $3$ & $1$\\
 $a_v$ & $5/2$ & $5/2$ & $1$\\
 otherwise & $3$ & $3$ & $0$\\
 \bottomrule
\end{tabular}
\end{center}
Thus, every agent slightly prefers its own slot and its next slot, the two agents next to the cut of player $v$ also value $J_v$, and every agent has total bonus $6$.
These lengths are chosen for Lemma~\ref{lem:envelope}.
If a cut splits $J_v$ into parts of lengths $z_v$ and $1-z_v$ as in Figure~\ref{fig:boundary}, then the bonus of agent $a_v+1$ on the piece in its next slot exceeds that on the piece in its own slot by $z_v$, and the bonus of agent $a_v$ on the piece in its next slot falls short of that on the piece in its own slot by $z_v$.
Since the pairs $\{a_v,a_v+1\}$ are pairwise disjoint, each agent has at most one boundary bonus.

\begin{figure}[t]
\centering
\begin{tikzpicture}[x=1cm,y=1cm,font=\small]
 \fill[black!7] (-0.6,0) -- (-0.6,1.2) -- (1.875,1.2) -- (1.875,2.2) -- (4.125,2.2) -- (4.125,1.2)
  -- (4.85,1.2) -- (4.85,2.2) -- (5.75,2.2) -- (5.75,1.2) -- (12.6,1.2) -- (12.6,0) -- cycle;
 \foreach \x in {1.875,4.125,4.85,5.75} \draw[gray!60,dotted] (\x,0.12) -- (\x,1.2);
 \draw[dotted] (-0.6,1.2) -- (12.6,1.2);
 \draw[dotted] (-0.6,2.2) -- (12.6,2.2);
 \node[left] at (-0.6,1.2) {$1$};
 \node[left] at (-0.6,2.2) {$2$};
 \draw[thick] (-0.6,1.2) -- (1.875,1.2) -- (1.875,2.2) -- (4.125,2.2) -- (4.125,1.2)
  -- (4.85,1.2) -- (4.85,2.2) -- (5.75,2.2) -- (5.75,1.2) -- (12.6,1.2);
 \node[right] at (12.6,1.2) {$f_{a_v}$};
 \draw[thick] (-0.6,0) -- (12.6,0);
 \draw[thick,dotted] (-1.1,0) -- (-0.6,0);
 \draw[thick,dotted] (12.6,0) -- (13.1,0);
 \foreach \x/\l in {0/{(a_v-1)B},6/{a_vB},12/{(a_v+1)B}}{
  \draw[thick] (\x,-0.16) -- (\x,0.16);
  \node[below] at (\x,-0.18) {$\l$};
 }
 \draw[coredark,line width=2pt,|-|] (1.5,0) -- (4.5,0);
 \draw[coredark,line width=2pt,|-|] (7.5,0) -- (10.5,0);
 \draw[bnddark,line width=2pt,|-|] (4.85,0) -- (5.75,0);
 \node[above,text=coredark] at (3,0.1) {$I_{a_v}$};
 \node[above,text=coredark] at (9,0.1) {$I_{a_v+1}$};
 \node[above right,text=bnddark] at (5.27,0.18) {$J_v$};
 \draw[dashed,thick] (5.25,-0.85) -- (5.25,3.05);
 \node[above] at (5.25,3.05) {$\theta_{a_v}$};
 \draw[decorate,decoration={brace,amplitude=3pt}] (4.85,2.3) -- (5.25,2.3)
  node[midway,above=3pt,anchor=south east,xshift=-1pt] {$z_v$};
 \draw[decorate,decoration={brace,amplitude=3pt}] (5.25,2.3) -- (5.75,2.3)
  node[midway,above=3pt,anchor=south west,xshift=1pt] {$1-z_v$};
 \draw[|<->|] (0.3,-0.85) -- (5.25,-0.85) node[midway,below] {$P_{a_v}$};
 \draw[|<->|] (5.25,-0.85) -- (11.7,-0.85) node[midway,below] {$P_{a_v+1}$};
\end{tikzpicture}
\caption{The cake around the cut $\theta_{a_v}$ (not to scale). The interval $J_v$ is valued by the agents $a_v$ and $a_v+1$. Above the cake, we draw the density $f_{a_v}$ of agent $a_v$, whose next core lies elsewhere. It has bonuses in the middle of its own core $I_{a_v}$ and on $J_v$. The lengths $z_v$ and $1-z_v$ of the two parts of $J_v$ and the pieces $P_{a_v}$ and $P_{a_v+1}$ below the cake are used in Section~\ref{sec:soundness}. The figure shows a cut inside $J_v$.}
\label{fig:boundary}
\end{figure}

\begin{lemma}\label{lem:values}
Every $f_i$ takes only the values $1$ and $2$ and is piecewise constant with at most six breakpoints.
Every interval $F$ satisfies
\begin{equation}\label{eq:uniform}
 0\le U_i(F)-|F|\le6,\qquad\text{and}\qquad U_i([0,L])=L+6.
\end{equation}
\end{lemma}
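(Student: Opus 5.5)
The plan is to write $f_i=1+\mathbf{1}_{E_i}$, where $E_i$ is the union of the at most three bonus intervals of agent $i$. The key step is to check that these intervals are pairwise disjoint subintervals of $[0,L]$ with total length $6$. Both claims of the lemma then follow at once.

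\emph{Total length.} By the table, the bonus lengths of every agent sum to $6$: $2+3+1$, $5/2+5/2+1$, and $3+3+0$ in the three rows.

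\emph{Containment and disjointness.} The bonus intervals centered in $I_i$ and $I_{\pi(i)}$ have length at most $3$. The cores have length $20N$, so each of these intervals lies well inside its core. The indices $i$ and $\pi(i)$ differ because $\pi$ is a single cycle on $N\ge2$ indices, and distinct cores are disjoint, since they lie in distinct slots and are separated by at least $20N$. It remains to treat $i\in\{a_v,a_v+1\}$, where $J_v$ is added.
\begin{itemize}
\item $J_v\subseteq[a_vB-13/2,\,a_vB-11/2]$ lies in the $a_v$th slot, but outside the core $I_{a_v}$, whose right end $a_vB-10N$ is far to the left of $J_v$ since $10N>13/2$.
\item $J_v$ meets no other core: it lies in the $a_v$th slot, whose core $I_{a_v}$ it avoids, while every other core lies in a different slot.
\end{itemize}
Hence all bonus intervals of agent $i$ are pairwise disjoint, and since the pairs $\{a_v,a_v+1\}$ are disjoint, each agent has at most one boundary bonus. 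So $E_i$ is a disjoint union of at most three intervals in $[0,L]$ with $|E_i|=6$.

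\emph{Values and breakpoints.} Since the bonus intervals are disjoint, $f_i$ takes only the values $1$ and $2$. Each bonus interval contributes two breakpoints, for at most six in total. A breakpoint at $0$ or $L$ could only reduce the count, and none occurs, because all bonus intervals lie strictly inside $[0,L]$.

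\emph{The inequalities~\eqref{eq:uniform}.} For every interval $F$,
\[
U_i(F)=|F|+|F\cap E_i|.
\]
Since $0\le|F\cap E_i|\le|E_i|=6$, this gives $0\le U_i(F)-|F|\le6$. Taking $F=[0,L]$, which contains $E_i$, gives $U_i([0,L])=L+6$.

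The only step that needs care is disjointness of $J_v$ from the cores, together with the fact that $i\neq\pi(i)$; both are immediate from $B=40N$ and the single-cycle property of $\pi$.
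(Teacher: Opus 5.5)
Your proposal is correct and follows the paper's argument. As in the paper, the bonus intervals of agent $i$ fit inside the distinct cores $I_i$ and $I_{\pi(i)}$ (distinct because $\pi(i)\ne i$), plus at most one boundary interval lying outside all cores, so they are disjoint with total length $6$, which gives the two values, the breakpoint count, and~\eqref{eq:uniform}; you just make explicit the distance checks ($10N>13/2$ and the $20N$ gaps between cores) that the paper leaves implicit.
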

\begin{proof}
The bonuses of agent $i$ lie in $I_i$, in $I_{\pi(i)}$, and possibly on one boundary interval.
The first two have length at most $3$ and hence fit in the cores, and the three intervals are disjoint, since $\pi(i)\ne i$ and every boundary interval lies outside all cores.
Hence, $f_i$ takes only the values $1$ and $2$, and its only breakpoints are the at most six endpoints of these intervals.
The total bonus is $6$, which gives~\eqref{eq:uniform}.
\end{proof}

\paragraph{The target instance.}
Since $U_i([0,L])=L+6$ by Lemma~\ref{lem:values}, the target instance of the \Cake{} problem consists of the normalized valuations and the unary-encoded precision
\[
 V_i([a,b])=\frac{U_i([La,Lb])}{L+6}\quad(i\in[N]),\qquad
 \varepsilon=\frac1{8000N^4}.
\]
The density of $V_i$ is $\rho_i(t)=Lf_i(Lt)/(L+6)$.
By Lemma~\ref{lem:values}, it takes only the values $\lambda$ and $2\lambda$ for $\lambda=L/(L+6)$, which is the same for all agents, and it has at most six breakpoints.

Let $(X_1,\ldots,X_N)$ be any $\varepsilon$-envy-free contiguous allocation for $V_1,\ldots,V_N$, where $X_i$ is the piece of agent $i$.
Multiplying its cuts by $L$ gives the allocation $(LX_1,\ldots,LX_N)$ of $[0,L]$, where $LX=\{Lt:t\in X\}$.
Since $L+6\le2L$ and $L=40N^2$, it satisfies
\[
 U_i(LX_j)-U_i(LX_i)=(L+6)\bigl(V_i(X_j)-V_i(X_i)\bigr)\le2L\varepsilon=\frac1{100N^2}\qquad(i,j\in[N]).
\]
Hence, it is $\eta$-envy-free for $U_1,\ldots,U_N$, where $\eta=1/(100N^2)$.

\subsection{From allocations to equilibria}\label{sec:soundness}
We now show how to construct a solution of the source instance from any solution of the target instance.
By Section~\ref{sec:construction}, it suffices to construct a $1/12$-approximate equilibrium from every $\eta$-envy-free allocation of $[0,L]$ for $U_1,\ldots,U_N$, where $\eta=1/(100N^2)$.
Throughout this subsection, we fix such an allocation, given by cuts $0=\theta_0\le\theta_1\le\cdots\le\theta_N=L$ and a bijection $\sigma\colon[N]\to[N]$.
Let $P_j=[\theta_{j-1},\theta_j]$ be the $j$th piece from the left, so that agent $i$ receives $P_{\sigma(i)}$.
Let $\ell_j=|P_j|$ be its length.
Then, $\eta$-envy-freeness means that
\[
 U_i(P_{\sigma(i)})\ge U_i(P_j)-\eta\qquad(i,j\in[N]).
\]
We show that the vector $z=(z_1,\ldots,z_s)$ given by
\begin{equation}\label{eq:decoder}
 z_v=\clip\bigl(\theta_{a_v}-a_vB+\tfrac{13}{2}\bigr)\qquad(v\in[s])
\end{equation}
is a $1/12$-approximate equilibrium.
We proceed in three steps.
First, for every $j$, the $j$th piece lies close to the $j$th slot (localization).
Second, regardless of the assignment $\sigma$, the length of every piece is determined by $z$ up to a small error (prices).
Third, the cut of each player $v$ lies at a position determined by the strategies of its in-neighbors, and hence $z$ is an approximate equilibrium (decoding).

\paragraph{Localization.}
Every agent values each interval at its length plus at most $6$.
Hence, approximate envy-freeness forces all piece lengths to be close to $B$ and each cut $\theta_j$ to be close to $jB$.

\begin{lemma}\label{lem:localization}
Every cut satisfies $|\theta_j-jB|<7N$. Consequently:
\begin{enumerate}[label=\textup{(\alph*)},leftmargin=*]
 \item\label{L:core} $I_j\subseteq P_j$ and $P_j\cap I_k=\emptyset$ for every $k\ne j$;
 \item\label{L:boundary} $J_v\subseteq P_{a_v}\cup P_{a_v+1}$ and $|J_v\cap P_{a_v}|=z_v$ for every $v$.
\end{enumerate}
\end{lemma}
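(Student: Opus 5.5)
The plan is to first bound every piece length, then sum these bounds to localize the cuts.

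\emph{Step 1: piece lengths.} By Lemma~\ref{lem:values}, $|F|\le U_i(F)\le|F|+6$ for every interval $F$. Take the longest piece $P_{j^*}$ and the shortest piece $P_{k^*}$. The agent $i$ receiving $P_{k^*}$ satisfies $\eta$-envy-freeness toward $P_{j^*}$, so
\[
 \ell_{k^*}+6\ge U_i(P_{k^*})\ge U_i(P_{j^*})-\eta\ge\ell_{j^*}-\eta .
\]
Hence all lengths lie within $6+\eta<7$ of each other. The lengths sum to $L=NB$, so the average length is $B$. Every length is therefore within $7$ of $B$, with one side strict: $\ell_j\le\ell_{j^*}<\ell_{k^*}+7\le B+7$, and symmetrically $\ell_j>B-7$.

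\emph{Step 2: cut positions.} Summing over the first $j$ pieces gives
\[
 |\theta_j-jB|\le\sum_{k\le j}|\ell_k-B|<7j\le7N .
\]

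\emph{Step 3: part (a).} The core $I_j$ is $[(j-1)B+10N,\ jB-10N]$. The piece $P_j$ is $[\theta_{j-1},\theta_j]$, and its endpoints lie within $7N$ of $(j-1)B$ and $jB$. So $P_j$ contains $I_j$ with margin $3N$ on each side. The piece $P_j$ is also contained in $((j-1)B-7N,\ jB+7N)$. Every other core $I_k$ starts at least $10N$ past a slot boundary, and the slot boundaries $(j-1)B$ and $jB$ separate $I_k$ from slot $j$. Hence $P_j$ reaches at most $7N<10N$ beyond slot $j$, and $P_j\cap I_k=\emptyset$.

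\emph{Step 4: part (b).} The bound $|\theta_{a_v}-a_vB|<7N$ alone does not place $\theta_{a_v}$ relative to $J_v$, which has length $1$. So Step 4 only needs $J_v\subseteq[\theta_{a_v-1},\theta_{a_v+1}]$. This holds because
\[
 \theta_{a_v-1}<(a_v-1)B+7N<a_vB-\tfrac{13}{2}
 \quad\text{and}\quad
 \theta_{a_v+1}>(a_v+1)B-7N>a_vB-\tfrac{11}{2},
\]
using $B=40N$. Thus $J_v\subseteq P_{a_v}\cup P_{a_v+1}$.

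\emph{Step 5: the identity $|J_v\cap P_{a_v}|=z_v$.} The intersection $J_v\cap P_{a_v}$ equals $[a_vB-\tfrac{13}{2},\ \theta_{a_v}]$, intersected with $J_v$ and with $P_{a_v}$. Its length is $\clip(\theta_{a_v}-a_vB+\tfrac{13}{2})$, which matches the definition of $z_v$ in~\eqref{eq:decoder}. The three cases are: the cut lies left of $J_v$ (length $0$), inside $J_v$, or right of $J_v$ (length $1$).

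The only delicate point is the strictness in Step 1, so that the final bound is $<7N$ rather than $\le7N$; it comes from $\eta<1$. The rest is routine.
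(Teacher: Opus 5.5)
Your proof is correct and follows the paper's argument essentially step for step. You compare the longest piece with the shortest one through the agent holding the shortest piece, use the average length $B$ to get $\ell_j\in(B-7,B+7)$, sum these bounds to localize the cuts, and then read off the core and boundary containments. The only cosmetic difference is that you show $\theta_{a_v+1}>a_vB-\tfrac{11}{2}$, while the paper uses the stronger $\theta_{a_v+1}>a_vB$; either suffices.
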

\begin{proof}
Let agent $i$ receive a shortest piece, and let $P_{j'}$ be a longest piece.
By~\eqref{eq:uniform} and $\eta$-envy-freeness,
\[
 \ell_{j'}\le U_i(P_{j'})\le U_i(P_{\sigma(i)})+\eta\le\ell_{\sigma(i)}+6+\eta.
\]
As the average length is $B$, every $\ell_j$ lies in $(B-7,B+7)$.
Since $\theta_j-jB=\sum_{k\in[j]}(\ell_k-B)$, we obtain $|\theta_j-jB|<7N$.
Thus, $\theta_{j-1}<(j-1)B+10N$ and $\theta_j>jB-10N$, which gives $I_j\subseteq P_j$.
Similarly, $P_j$ meets no neighboring core, and hence no other core.
Since $B-7N>\frac{13}{2}$, we also have $\theta_{a_v-1}<a_vB-\frac{13}{2}$ and $\theta_{a_v+1}>a_vB$.
Thus, $J_v=[a_vB-\frac{13}{2},a_vB-\frac{11}{2}]$ lies in $P_{a_v}\cup P_{a_v+1}$, and the part in $P_{a_v}$ has length $\clip(\theta_{a_v}-a_vB+\frac{13}{2})=z_v$.
Figure~\ref{fig:boundary} shows these two parts of $J_v$.
\end{proof}

\paragraph{Prices.}
Localization determines where each piece lies, but not which agent receives it, and it determines the lengths only up to an error of $7$.
A bonus in its own slot alone would give each agent only one-sided inequalities between lengths.
We therefore make each agent $i$ indifferent, at suitable prices, between the two pieces $P_i$ and $P_{\pi(i)}$, so that approximate envy-freeness forces the piece received by agent $i$, whichever it is, to cost almost as much as both.
Summing over agents removes the assignment permutation, and the single cycle connects all prices.
The following combinatorial lemma makes this precise.

\begin{lemma}\label{lem:cycle}
Let $\pi$ and $\sigma$ be permutations of $[N]$ such that $\pi$ consists of a single cycle, and let $\eta\ge0$.
If reals $q_1,\ldots,q_N$ satisfy $q_{\sigma(i)}\ge\max\{q_i,q_{\pi(i)}\}-\eta$ for every $i\in[N]$, then $\max_jq_j-\min_jq_j\le N\eta$.
\end{lemma}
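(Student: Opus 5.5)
The plan is to sum the hypothesis over all $i$ and use that $\sigma$ is a bijection, as sketched in the overview (Section~\ref{sec:overview}). Since $\sigma$ permutes $[N]$, we have $\sum_i q_{\sigma(i)}=\sum_i q_i$. Subtracting $q_i$ from both sides of each hypothesis gives
\[
 q_{\sigma(i)}-q_i\ge\max\{q_i,q_{\pi(i)}\}-q_i-\eta=\max\{0,q_{\pi(i)}-q_i\}-\eta .
\]
Summing over $i\in[N]$, the left-hand side telescopes to $0$ by the bijection property. This yields
\[
 \sum_{i\in[N]}\max\{0,\,q_{\pi(i)}-q_i\}\le N\eta .
\]
Thus the total positive increment of $q$ along the cycle $\pi$ is at most $N\eta$.

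Next I would use that $\pi$ is a single cycle. Pick $j^-$ with $q_{j^-}=\min_j q_j$ and $j^+$ with $q_{j^+}=\max_j q_j$. Because $\pi$ is one cycle, iterating $\pi$ from $j^-$ reaches $j^+$, say $j^+=\pi^k(j^-)$ for some $0\le k<N$, and the indices $i=\pi^t(j^-)$ for $0\le t<k$ are distinct. Then
\[
 q_{j^+}-q_{j^-}=\sum_{t=0}^{k-1}\bigl(q_{\pi(\pi^t(j^-))}-q_{\pi^t(j^-)}\bigr)\le\sum_{t=0}^{k-1}\max\{0,q_{\pi(\pi^t(j^-))}-q_{\pi^t(j^-)}\}.
\]
Since the indices are distinct, this is at most the full sum over $[N]$, hence at most $N\eta$.

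There is no real obstacle here. The only points needing care are that the telescoping uses $\sigma$ being a bijection, and that the single-cycle hypothesis is needed so that the path from the minimizer to the maximizer exists and visits each index at most once. The case $j^+=j^-$ is trivial.
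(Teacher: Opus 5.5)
Your proof is correct and follows the paper's argument: you sum the hypothesis, using that $\sigma$ is a bijection, to bound the total positive increment of $q$ along $\pi$ by $N\eta$, and then you follow the single cycle from a minimizer to a maximizer. The step that removes $\sigma$ is a reindexing of the sum rather than ``telescoping,'' but your justification via the bijection property is exactly what is needed.
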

\begin{proof}
Summing the assumption over $i\in[N]$ and using $\sum_{i\in[N]}q_{\sigma(i)}=\sum_{i\in[N]}q_i$ gives
\[
 \sum_{i\in[N]}\max\{0,\,q_{\pi(i)}-q_i\}
 =\sum_{i\in[N]}\bigl(\max\{q_i,q_{\pi(i)}\}-q_i\bigr)\le N\eta.
\]
The $i$th term on the left is the increase of $q$ at the step from $i$ to $\pi(i)$.
Since $\pi$ is a single cycle, it leads from an index minimizing $q$ to an index maximizing $q$.
The increases along this path sum to at least $\max_jq_j-\min_jq_j$, which is therefore at most $N\eta$.
\end{proof}

In our setting, the assumption of Lemma~\ref{lem:cycle} follows from a bound on value differences that is tight for the next piece.
Suppose that reals $q_j$ satisfy
\begin{equation}\label{eq:envelope}
\begin{aligned}
 U_i(P_j)-U_i(P_i)&\le q_j-q_i&&\text{for all }i,j\in[N],\\
 U_i(P_{\pi(i)})-U_i(P_i)&=q_{\pi(i)}-q_i&&\text{for all }i\in[N].
\end{aligned}
\end{equation}
In other words, every agent $i$ with quasi-linear utility $U_i(P_j)-q_j$ weakly prefers both $P_i$ and $P_{\pi(i)}$ to every other piece, and we call $q_j$ the \emph{price} of $P_j$, although no money is involved.
By~\eqref{eq:envelope} for $j=\sigma(i)$, $\eta$-envy-freeness towards $P_i$ and $P_{\pi(i)}$, and the equality case of~\eqref{eq:envelope}, we have
\[
 q_{\sigma(i)}-q_i\ge U_i(P_{\sigma(i)})-U_i(P_i)\ge\max\{U_i(P_i),U_i(P_{\pi(i)})\}-\eta-U_i(P_i)=\max\{0,\,q_{\pi(i)}-q_i\}-\eta.
\]
Hence, Lemma~\ref{lem:cycle} gives $\max_jq_j-\min_jq_j\le N\eta$.

It remains to choose the prices.
Since agent $i$ values $P_j$ at its length $\ell_j$ plus bonuses, we take as the price of $P_j$ its length plus the decoded value of its set,
\[
 q_j=\begin{cases}
  \ell_j+z_v&\text{if }j\in S_v,\\
  \ell_j&\text{if $j$ is in no set.}
 \end{cases}
\]
The term $z_v$ absorbs the boundary bonuses, which differ between the pieces in the own slot and the next slot of an agent.
The next lemma shows that these prices satisfy~\eqref{eq:envelope}, and hence they are all nearly equal.

\begin{lemma}\label{lem:envelope}
The pieces $P_1,\ldots,P_N$ satisfy~\eqref{eq:envelope} with these prices.
\end{lemma}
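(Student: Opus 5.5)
We write every value as length plus bonuses: $U_i(P_j)=\ell_j+\beta_i(P_j)$, where $\beta_i(P_j)$ is the bonus of agent $i$ on $P_j$. With the prices $q_j=\ell_j+z(j)$, where $z(j)=z_v$ if $j\in S_v$ and $z(j)=0$ otherwise, the lengths cancel in~\eqref{eq:envelope}. Both conditions then become statements about bonuses:
\[
 \beta_i(P_j)-\beta_i(P_i)\le z(j)-z(i),\qquad \beta_i(P_{\pi(i)})-\beta_i(P_i)=z(\pi(i))-z(i).
\]
By Lemma~\ref{lem:localization}\ref{L:core}, the core bonuses of agent $i$ lie entirely in $P_i$ and $P_{\pi(i)}$. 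By Lemma~\ref{lem:localization}\ref{L:boundary}, the boundary bonus on $J_v$ (for $i\in\{a_v,a_v+1\}$) contributes $z_v$ to $P_{a_v}$, contributes $1-z_v$ to $P_{a_v+1}$, and contributes nothing to any other piece. So every $\beta_i(P_j)$ can be computed explicitly, and we verify the two conditions case by case using the table of bonus lengths and property~\ref{S:arc}.

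\emph{Generic agents} ($i\notin\{a_v,a_v+1\}$ for all $v$). Here $\beta_i(P_i)=\beta_i(P_{\pi(i)})=3$ and $\beta_i(P_j)=0$ for every other $j$. By~\ref{S:arc}, the cycle $\pi$ only changes sets at the transitions $a_v+1\to S_v$ and $a_v\to\pi(a_v)$, so a generic $i$ satisfies $z(\pi(i))=z(i)$; this gives the equality. For the inequality with $j\notin\{i,\pi(i)\}$, we need $-3\le z(j)-z(i)$, which holds because all $z$-values lie in $[0,1]$.

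\emph{Agent $a_v+1$.} This index is in no set, so $z(a_v+1)=0$, while $\pi(a_v+1)\in S_v$. Its own piece $P_{a_v+1}$ gets bonus $2+(1-z_v)=3-z_v$. Its next piece gets $3$ (and $P_{\pi(a_v+1)}\neq P_{a_v}$, since $\pi(a_v+1)\ne a_v$). The difference is $z_v=z(\pi(a_v+1))-z(a_v+1)$, as required. Every other $j$ gets bonus $0$, except $P_{a_v}$, which gets $z_v$. For $j=a_v\in S_v$ the condition reads $z_v-(3-z_v)\le z_v-0$, which holds. For the remaining $j$ it reads $-(3-z_v)\le z(j)$, which holds.

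\emph{Agent $a_v$.} This index lies in $S_v$, so $z(a_v)=z_v$, while $\pi(a_v)$ is in no set and $\pi(a_v)\ne a_v+1$. Its own piece gets $5/2+z_v$ and its next piece gets $5/2$. The difference is $-z_v=0-z_v$, as required. For $j=a_v+1$ the bonus is $1-z_v$, and the condition $(1-z_v)-(5/2+z_v)\le 0-z_v$ holds. For any other $j$ the bonus is $0$, and we need $-(5/2+z_v)\le z(j)-z_v$, i.e.\ $z(j)\ge-5/2$, which holds.

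The main obstacle is keeping this bookkeeping correct. In particular, we must check that the next piece never coincides with the neighbour across the cut, which is exactly the condition $\pi(a_v)\ne a_v+1$, $\pi(a_v+1)\ne a_v$ in~\ref{S:arc}. We must also confirm that generic consecutive indices along $\pi$ share the same set, using that the indices of each $S_v$ are visited consecutively. Because the bonus lengths were chosen to make these differences come out exactly, the inequalities themselves have slack, so they are the easy part.
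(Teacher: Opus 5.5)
Your proof is correct and follows essentially the same route as the paper. Like the paper, you write each value as length plus core and boundary bonuses via Lemma~\ref{lem:localization}, use~\ref{S:arc} to see that the offset $q_j-\ell_j$ changes along $\pi$ only at the steps from $a_v+1$ and from $a_v$, and read off the equality case from the bonuses on $P_i$ and $P_{\pi(i)}$. The only difference is cosmetic: the paper settles all inequalities for $j\notin\{i,\pi(i)\}$ at once (bonus on $P_j$ at most $1$, bonus on $P_i$ at least $2$, every offset in $[0,1]$), whereas you check them agent type by agent type.
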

\begin{proof}
By Lemma~\ref{lem:localization}\ref{L:core}, the piece $P_j$ contains $I_j$ and is disjoint from every other core.
Thus, agent $i$ values $P_j$ at $\ell_j$ plus a core bonus and a boundary bonus, and the core bonus is $0$ unless $j\in\{i,\pi(i)\}$.
If $i\in\{a_v,a_v+1\}$, then its boundary bonus on $P_j$ is $|J_v\cap P_j|$, which is $z_v$ for $j=a_v$, $1-z_v$ for $j=a_v+1$, and $0$ otherwise by Lemma~\ref{lem:localization}\ref{L:boundary}.
In particular, the boundary bonus is at most $1$, and it is $0$ on $P_{\pi(i)}$ by~\ref{S:arc}.
Also by~\ref{S:arc}, the cycle never moves directly from one set $S_v$ to another, and hence $q_j-\ell_j$ changes along $\pi$ only at the step from $a_v+1$, from $0$ to $z_v$, and at the step from $a_v$, from $z_v$ to $0$.
Therefore, the bonuses of agent $i$ on $P_i$ and $P_{\pi(i)}$ are as follows:
\begin{center}
\setlength{\tabcolsep}{12pt}
\begin{tabular}{lccc}
 \toprule
 agent $i$ & bonus on $P_i$ & bonus on $P_{\pi(i)}$ & $(q_{\pi(i)}-\ell_{\pi(i)})-(q_i-\ell_i)$\\
 \midrule
 $a_v+1$ & $2+(1-z_v)$ & $3$ & $z_v$\\
 $a_v$ & $5/2+z_v$ & $5/2$ & $-z_v$\\
 otherwise & $3$ & $3$ & $0$\\
 \bottomrule
\end{tabular}
\end{center}
In each case, the bonus on $P_{\pi(i)}$ exceeds that on $P_i$ by the last column, and hence
\[
 U_i(P_{\pi(i)})-U_i(P_i)=(\ell_{\pi(i)}-\ell_i)+\bigl((q_{\pi(i)}-\ell_{\pi(i)})-(q_i-\ell_i)\bigr)=q_{\pi(i)}-q_i.
\]
Here, the bonus of agent $a_v+1$ in its own core has length $2$ rather than $3$, which compensates for the constant $1$ in its boundary bonus $1-z_v$.
For every $j\notin\{i,\pi(i)\}$, the bonus of agent $i$ on $P_j$ is at most $1$, whereas its bonus on $P_i$ is at least $2$.
Since $0\le q_k-\ell_k\le1$ for every $k$, we obtain, for every $j\notin\{i,\pi(i)\}$,
\[
 U_i(P_j)-U_i(P_i)\le(\ell_j+1)-(\ell_i+2)=\ell_j-\ell_i-1\le q_j-(q_i-1)-1=q_j-q_i.
\]
Together with the equality for $j=\pi(i)$ and the trivial case $j=i$, this proves~\eqref{eq:envelope}.
\end{proof}

\paragraph{Decoding.}
By the previous step, every piece associated with player $v$ is shorter by about $z_v$ than a piece in no set.
Property~\ref{S:counts} places twelve extra such pieces before the cut of player $u$ for each arc $(v,u)$, shifting the cut to the left by about $12(Az)_u$.

\begin{lemma}\label{lem:decoding}
For every $u\in[s]$, we have $|\theta_{a_u}-a_uB+12(Az)_u|\le\frac1{100}$.
\end{lemma}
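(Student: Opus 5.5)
We will write $\theta_{a_u}=\sum_{j\in[a_u]}\ell_j$ and express each length through its price. By the definition of the prices, $\ell_j=q_j-z_v$ if $j\in S_v$ and $\ell_j=q_j$ if $j$ lies in no set. Hence
\[
 \theta_{a_u}=\sum_{j\in[a_u]}q_j-\sum_{v\in[s]}z_v\,|S_v\cap[a_u]|.
\]
By Lemmas~\ref{lem:envelope} and~\ref{lem:cycle}, applied as described after~\eqref{eq:envelope}, all prices lie in an interval $[\bar q,\bar q+N\eta]$, where $\bar q=\min_jq_j$. Property~\ref{S:counts} gives $|S_v\cap[a_u]|=32u+12A_{uv}$. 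Substituting,
\[
 \theta_{a_u}=a_u\bar q+E_u-32u\sum_{v}z_v-12(Az)_u,
\]
where $0\le E_u\le a_uN\eta$.

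Next we will pin down $\bar q$ by using the whole cake. Taking the same identity with $a_u$ replaced by $N$, and using $\theta_N=L=NB$ and $|S_v|=R$, we get
\[
 NB=N\bar q+E-R\sum_vz_v,
\]
where $0\le E\le N^2\eta$. Thus $\bar q=B+\frac{R}{N}\sum_vz_v-E/N$.

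Now substitute into the expression for $\theta_{a_u}$. Since $a_u=uR$, we have $a_u\cdot\frac{R}{N}\sum_v z_v=\frac{uR^2}{N}\sum_vz_v$, and we need this to equal $32u\sum_vz_v$. Indeed $R^2/N=R/(s+1)=32$, because $N=(s+1)R$ and $R=32(s+1)$. So the $\sum_vz_v$ terms cancel exactly, leaving
\[
 \theta_{a_u}-a_uB+12(Az)_u=E_u-\frac{a_u}{N}E.
\]
Both error terms are at most $N^2\eta$ in absolute value, and their difference lies in $[-N^2\eta,N^2\eta]$ because both are nonnegative. Since $\eta=1/(100N^2)$, this gives $|\theta_{a_u}-a_uB+12(Az)_u|\le N^2\eta=\frac1{100}$, as claimed.

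The main point requiring care is the exact cancellation of the average term $\sum_v z_v$. This relies on the bin structure making the count $32u$ proportional to $a_u/N$ times $R$, and it checks out as shown. A secondary point is to keep the error one-sided, so that the bound is $N^2\eta$ rather than $2N^2\eta$. This holds because $E_u$ and $\frac{a_u}{N}E$ both lie in $[0,N^2\eta]$. No localization beyond what the prices lemma already used is needed.
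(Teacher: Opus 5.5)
Your proof is correct and follows the paper's argument: the paper centers the prices at their average $\frac1N\sum_j q_j=B+\frac RN\sum_v z_v$, computed from the total length, and uses $a_uR/N=32u$ to cancel the $\sum_v z_v$ terms, exactly as you do. Anchoring at the minimum price instead is only a reparametrization, since your error term $E_u-\frac{a_u}{N}E$ equals the paper's centered sum $\sum_{j\in[a_u]}\bigl(q_j-\frac1N\sum_k q_k\bigr)$.
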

\begin{proof}
Let $\bar q=\frac1N\sum_{j\in[N]}q_j$ be the average price.
By Lemmas~\ref{lem:cycle} and~\ref{lem:envelope}, we have $\max_jq_j-\min_jq_j\le N\eta$, and hence $\bigl|\sum_{j\in[a_u]}(q_j-\bar q)\bigr|\le a_uN\eta\le N^2\eta=\frac1{100}$.
By the definition of the prices, $q_j-\ell_j$ is $z_v$ for $j\in S_v$ and $0$ for $j$ in no set.
Hence, $\sum_{j\in[N]}(q_j-\ell_j)=R\sum_{v\in[s]}z_v$ by~\ref{S:counts}, and since $\sum_{j\in[N]}\ell_j=NB$, we have $\bar q=B+\frac RN\sum_{v\in[s]}z_v$.
Similarly, $\sum_{j\in[a_u]}(q_j-\ell_j)=\sum_{v\in[s]}|S_v\cap[a_u]|\,z_v$, and hence property~\ref{S:counts} and $a_uR/N=32u$ give
\begin{align*}
 \theta_{a_u}-a_uB
 &=\sum_{j\in[a_u]}(\ell_j-B)
 =\sum_{j\in[a_u]}(q_j-\bar q)+a_u(\bar q-B)-\sum_{j\in[a_u]}(q_j-\ell_j)\\
 &=\sum_{j\in[a_u]}(q_j-\bar q)-\sum_{v\in[s]}\Bigl(|S_v\cap[a_u]|-\frac{a_uR}N\Bigr)z_v
 =\sum_{j\in[a_u]}(q_j-\bar q)-12(Az)_u.
\end{align*}
Since the first sum has absolute value at most $\frac1{100}$, the lemma follows.
\end{proof}

We are now ready to prove Theorem~\ref{thm:main}.

\begin{proof}[Proof of Theorem~\ref{thm:main}]
Membership is Lemma~\ref{lem:membership}.
For hardness, consider the search reduction that maps a game to the target instance of Section~\ref{sec:construction}.
Since $N=32(s+1)^2$ and $L=40N^2$, the target instance can be computed in polynomial time, and the unary encoding of $\varepsilon$ has length $O(s^8)$.
Given any $\varepsilon$-envy-free allocation for $V_1,\ldots,V_N$, multiplying its cuts by $L$ gives an $\eta$-envy-free allocation for $U_1,\ldots,U_N$ by Section~\ref{sec:construction}.
Let $\theta_0,\ldots,\theta_N$ be the cuts of this allocation, and output the vector $z$ given by~\eqref{eq:decoder}.
We show that $z$ is a $1/12$-approximate equilibrium, whatever the assignment.
Fix $v\in[s]$.
By Lemma~\ref{lem:decoding}, we have $|\theta_{a_v}-a_vB+12(Az)_v|\le\frac1{100}$.
If $(Az)_v<5/12$, then $\theta_{a_v}-a_vB+\frac{13}{2}>\frac{13}{2}-5-\frac1{100}>1$, and hence $z_v=1$.
If $(Az)_v>7/12$, then $\theta_{a_v}-a_vB+\frac{13}{2}<\frac{13}{2}-7+\frac1{100}<0$, and hence $z_v=0$.
Thus, $z$ is a $1/12$-approximate equilibrium.
By Theorem~\ref{thm:source}, this proves PPAD-hardness.
\end{proof}

Theorem~\ref{thm:main} also determines the complexity of exact envy-freeness.

\begin{corollary}\label{cor:exact}
Computing a contiguous envy-free allocation for normalized rational piecewise-constant additive valuations is PPAD-complete.
The hardness holds even when every density takes only the values $\lambda$ and $2\lambda$ for some $\lambda>0$ common to all agents and has at most six breakpoints.
\end{corollary}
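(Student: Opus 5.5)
Membership is already given by Lemma~\ref{lem:membership}, which places the exact problem in PPAD for explicit rational piecewise-constant valuations. It remains to show hardness, and we obtain it directly from the reduction of Section~\ref{sec:hardness}. The map from threshold games to target instances is unchanged: the same valuations $V_1,\ldots,V_N$, where we simply drop the precision $\varepsilon$. The density restrictions therefore carry over verbatim from Lemma~\ref{lem:values}: every density takes only the values $\lambda$ and $2\lambda$ with $\lambda=L/(L+6)$, and has at most six breakpoints.

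Next we check that every solution of the exact instance maps back to a solution of the source. Fix any contiguous envy-free allocation for $V_1,\ldots,V_N$ with rational cuts. It satisfies $V_i(X_i)\ge V_i(X_j)$ for all $i,j$. In particular, it is $\varepsilon$-envy-free for $\varepsilon=1/(8000N^4)$, so it is a solution of the \Cake{} instance constructed in Section~\ref{sec:construction}. We now apply the decoder of the proof of Theorem~\ref{thm:main}: multiply the cuts by $L$ and output $z$ via~\eqref{eq:decoder}. By Lemmas~\ref{lem:localization}, \ref{lem:envelope}, \ref{lem:cycle}, and~\ref{lem:decoding}, this $z$ is a $1/12$-approximate equilibrium, whatever the assignment $\sigma$. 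The mapping back runs in polynomial time, so it is a search reduction from the $1/12$-\TG{} problem. Theorem~\ref{thm:source} then gives PPAD-hardness.

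The only point needing care is that the exact problem is a well-defined total search problem whose solutions have polynomial size. Otherwise the decoder might receive irrational cuts, or the problem might not fit the PPAD framework at all. This is exactly what the membership statement of Lemma~\ref{lem:membership} (via~\cite{FilosRatsikasHHH2026}) provides: rational envy-free solutions of polynomial bit length exist and can be verified exactly. Hence no further work is needed, and the decoding arithmetic from the proof of Theorem~\ref{thm:main} applies unchanged to rational input.
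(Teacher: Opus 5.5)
Your proposal is correct and matches the paper's proof. Membership is Lemma~\ref{lem:membership}, and hardness follows from Theorem~\ref{thm:main} because every envy-free allocation is $\varepsilon$-envy-free, so the same instances and decoder work; you simply spell out the reduction that the paper leaves implicit.
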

\begin{proof}
Since an envy-free allocation is $\varepsilon$-envy-free for every $\varepsilon>0$, the hardness follows from Theorem~\ref{thm:main}, and membership is Lemma~\ref{lem:membership}.
\end{proof}

\section{Indivisible goods}\label{sec:discrete}
In this section, we prove Theorem~\ref{thm:discrete}, using the definitions of Section~\ref{sec:items} for goods.
For binary valuations, since each removed good lowers $\nu_i(X_j)$ by at most $1$, every EF$c$ allocation, and in particular every EF$c_{\mathrm{outer}}$ allocation, satisfies
\begin{equation}\label{eq:binaryEF}
 \nu_i(X_j)\le\nu_i(X_i)+c\qquad(i,j\in[n]).
\end{equation}

We prove hardness by discretizing the hard instances of Theorem~\ref{thm:main}, and membership by rounding exact envy-free divisions of a cake.

\subsection{Hardness}\label{sec:discrete-hardness}
The reduction places goods at the quantiles of the valuations.
Goldberg, Hollender, and Suksompong~\cite[Proposition~6.7]{GoldbergHS2020b} used the same construction for $c=0$, that is, for exact envy-freeness of the goods.

\begin{lemma}\label{lem:quantile}
For every fixed integer $c\ge0$, computing a contiguous EF$c$ allocation is PPAD-hard for binary valuations in which every good is valued by exactly one agent and all agents value the same number of goods.
The same holds for EF$c_{\mathrm{outer}}$.
\end{lemma}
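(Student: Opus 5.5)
We reduce from the \Cake{} problem restricted to the hard instances of Theorem~\ref{thm:main}, working with the unnormalized valuations $U_1,\ldots,U_N$ on $[0,L]$ of Section~\ref{sec:construction}. We choose a large integer $K$, polynomial in $N$, and for each agent $i$ place $M=K(L+6)$ goods at the $U_i$-quantiles $t_{i,k}$. These are the points with $U_i([0,t_{i,k}])=(k-\tfrac12)/K$ for $k\in[M]$; one could also use scaled densities so that all $t_{i,k}$ are rational. Each good is valued $1$ by its owner $i$ and $0$ by everyone else. Since the densities take values $1$ and $2$ with rational breakpoints, every quantile is rational and computable in polynomial time. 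Ties between agents at the same point are broken by a fixed order, and the goods are listed on the path in order of position. This gives binary valuations, each good valued by exactly one agent, and every agent valuing exactly $M$ goods, as required.

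Given any EF$c$ (or EF$c_{\mathrm{outer}}$) allocation of the goods, we turn it into cake cuts. If the $j$th bundle consists of goods $g_{\beta_{j-1}+1},\ldots,g_{\beta_j}$, we put cut $\theta_j$ at the position of the last good of that bundle (or midway to the next good), with $\theta_0=0$, $\theta_N=L$, and keep the same bijection $\sigma$. The key counting fact is that for any interval $F$ of the cake, the number of agent-$i$ goods in $F$ differs from $K\,U_i(F)$ by at most $1$. Hence for the piece $P_j$ corresponding to bundle $j$, we have $|\nu_i(X_{\sigma^{-1}(j)})-K\,U_i(P_j)|\le 1$, up to goods lying exactly on a cut, which contribute at most one more unit. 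Inequality~\eqref{eq:binaryEF} gives $\nu_i(X_j)\le\nu_i(X_i)+c$. Dividing by $K$ then yields $U_i(P_j)\le U_i(P_{\sigma(i)})+(c+4)/K$. Choosing $K=\lceil (c+4)/\eta\rceil=O(N^2)$ with $\eta=1/(100N^2)$ makes the cake allocation $\eta$-envy-free for $U_1,\ldots,U_N$. Section~\ref{sec:soundness} then decodes a $1/12$-approximate equilibrium, so the full chain is a polynomial-time search reduction from the $1/12$-\TG{} problem. Because EF$c_{\mathrm{outer}}$ implies EF$c$, which implies~\eqref{eq:binaryEF}, the argument covers both notions at once.

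The main obstacle is the boundary bookkeeping. Goods at the same or nearby positions must be handled so that a bundle boundary maps to a single cut position without losing more than a constant number of units per piece and per agent. We would state the $\pm1$ counting claim carefully for half-open intervals $(\theta_{j-1},\theta_j]$ and verify it directly from the quantile definition, using that $U_i$ has a density bounded below by $1$, so that quantiles are strictly increasing. Polynomial size also needs checking: $M=K(L+6)=O(N^4)$ goods per agent suffices. Since $c$ is fixed, or more generally $c\le\mathrm{poly}$, $K$ stays polynomial.
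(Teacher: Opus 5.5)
Your proposal is correct and follows essentially the paper's route: goods at the $(r-\tfrac12)/k$-quantiles valued only by their owner, cuts placed between consecutive goods with the same assignment, and the binary bound $\nu_i(X_j)\le\nu_i(X_i)+c$. The differences are minor: you compose directly with the construction of Section~\ref{sec:hardness}, targeting $\eta$-envy-freeness for the unnormalized $U_i$, rather than reducing generically from unary-$\varepsilon$ \Cake{} with positive densities; your per-piece endpoint bookkeeping gives the looser but sufficient $(c+4)/K$, where the paper's prefix estimate $|\Phi_i(\tau(\beta))-D_i(\beta)/k|\le\frac1{2k}$ gives $(c+2)/k$; and you should add, as the paper does, that such allocations always exist (Marenco--Tetzlaff for $c=0$, Igarashi for $c\ge1$), so that the restricted search problem is total.
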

\begin{proof}
Such allocations always exist, for $c=0$ by Marenco and Tetzlaff~\cite{MarencoT2014} since every good is valued by exactly one agent, and for $c\ge1$ by Igarashi~\cite{Igarashi2023}.
By Theorem~\ref{thm:main}, the \Cake{} problem is PPAD-hard even when $\varepsilon$ is encoded in unary and every density is positive.
We give a polynomial-time search reduction from such instances.
Consider an instance with valuations $V_1,\ldots,V_n$ and precision $\varepsilon$, and let $k=\lceil(c+2)/\varepsilon\rceil$.
Let $\Phi_i(t)=V_i([0,t])$ be the cumulative valuation of agent $i$.
For each $i\in[n]$ and $r\in[k]$, we place a good at the unique point $t_{ir}$ satisfying
\[
 \Phi_i(t_{ir})=\frac{r-1/2}{k},
\]
and let agent $i$ value this good at $1$ and all other agents at $0$.
Since the densities are positive and piecewise constant, all positions are rational and can be computed in polynomial time.
The instance consists of the $nk$ goods sorted by position, with ties broken arbitrarily.
Since $\varepsilon$ is given in unary and $c$ is fixed, $nk$ is polynomial in the input length.

Since every EF$c_{\mathrm{outer}}$ allocation is EF$c$, it suffices to map EF$c$ allocations.
For $0<\beta<nk$, let $\tau(\beta)$ be the midpoint of the positions of the $\beta$th and the $(\beta+1)$st good, and let $\tau(0)=0$ and $\tau(nk)=1$.
Write $D_i(\beta)$ for the number of goods valued by agent $i$ among the first $\beta$ goods.
The key estimate is
\[
 \left|\Phi_i(\tau(\beta))-\frac{D_i(\beta)}{k}\right|\le\frac1{2k}
 \qquad(0\le\beta\le nk).
\]
Indeed, let $d=D_i(\beta)$, $t_{i0}=0$, and $t_{i,k+1}=1$.
Since the goods are sorted by position and the goods of agent $i$ lie at distinct positions, the first $\beta$ goods contain exactly the first $d$ goods of agent $i$, and hence $t_{id}\le\tau(\beta)\le t_{i,d+1}$.
Since $\Phi_i(t_{id})\ge(d-1/2)/k$ and $\Phi_i(t_{i,d+1})\le(d+1/2)/k$, the estimate follows.

Given a contiguous EF$c$ allocation with indices $\beta_0,\ldots,\beta_n$ and a bijection $\sigma$, use the cake cuts $\tau(\beta_0),\ldots,\tau(\beta_n)$ and the same $\sigma$, and let $Y_j$ be the piece of agent $j$.
Subtracting the cumulative estimates at the two ends gives
\[
 \left|V_i(Y_j)-\frac{\nu_i(X_j)}{k}\right|\le\frac1k.
\]
Therefore, by~\eqref{eq:binaryEF},
\[
 V_i(Y_j)-V_i(Y_i)\le\frac{\nu_i(X_j)-\nu_i(X_i)+2}{k}\le\frac{c+2}{k}\le\varepsilon.
 \qedhere
\]
\end{proof}

\subsection{Membership}\label{sec:discrete-membership}
We prove membership by rounding an exact envy-free division of a cake with piecewise-linear virtual valuations.
We also use this argument for chores in Section~\ref{sec:chores}.

We identify the $\ell$th item with the interval $[\ell-1,\ell]$, so that cuts $0=y_0\le y_1\le\cdots\le y_n=m$ divide $[0,m]$ into the intervals $F_j=[y_{j-1},y_j]$.
For a set function $w$ on the items with $w(\emptyset)=0$ and a set $S$ of consecutive items, let $w^+(S)$ and $w^-(S)$ be the maximum and the minimum of $w(S\setminus G)$ over all $G\subseteq S$ with $|G|\le1$ such that $S\setminus G$ is connected or empty.
Dupr\'e la Tour and Igarashi~\cite[Theorem~1]{DupreLaTourI2026} constructed, for every such $w$, a function $\tilde w$ on the intervals of $[0,m]$, together with a map $\mathcal R$ from cut vectors to partitions $(A_1,\ldots,A_n)$ of the items into intervals, where $A_j$ is the $j$th interval from the left, such that
\begin{equation}\label{eq:rounding}
 w^-(A_j)\le\tilde w(F_j)\le w^+(A_j)
\end{equation}
for every cut vector and every $j\in[n]$.
The function $\tilde w$ is nonnegative if $w$ is nonnegative, nonpositive if $w$ is nonpositive, and $0$ on intervals of length zero.
It is the affine interpolation, on the Kuhn triangulation of $\{(a,b):0\le a\le b\le m\}$ with step $\frac13$, of its values at the grid points, each of which is the median of the values of $w$ on three sets of consecutive items~\cite{BiloLV2026}.
The map $\mathcal R$ does not depend on $w$: it locates a simplex of the Kuhn triangulation with step $\frac13$ of the set of cut vectors that contains the given cut vector and applies a fixed rule to its cuts.
For additive $w$, the value of $w$ on a set of consecutive items is a sum of item values, and hence both maps can be evaluated in polynomial time.

Our membership proofs use the following result of Filos-Ratsikas, Hansen, H{\o}gh, and Hollender~\cite{FilosRatsikasHHH2026}.
Consider a cake $[0,\Lambda]$ and $n$ agents, where agent $i$ values each interval $[a,b]\subseteq[0,\Lambda]$ at $U_i([a,b])$, given by a polynomial-size piecewise-linear arithmetic circuit with inputs $a$ and $b$.
Suppose also that $U_i([a,a])=0$, and that either $U_i([a,b])>0$ for all $i$ and $a<b$, or $U_i([a,b])<0$ for all $i$ and $a<b$.
Then, a contiguous allocation with $U_i(X_i)\ge U_i(X_j)$ for all $i,j$ can be computed in PPAD.
Filos-Ratsikas et al.\ assume nonnegative values, which we obtain in the second case by adding to all values a constant of polynomial bit length that bounds $|U_i|$, and this does not affect envy-freeness.

\begin{lemma}\label{lem:discrete-membership}
For nonnegative rational additive valuations of goods, and also for nonnegative rational additive costs of chores, computing a contiguous EF1$_{\mathrm{outer}}$ allocation lies in PPAD.
\end{lemma}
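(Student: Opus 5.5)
We reduce to the result of Filos-Ratsikas et al.\ stated just above: from the discrete instance we build a cake $[0,m]$ with virtual valuations $\tilde\nu_i$ (respectively $\tilde\kappa_i$) supplied by the construction of Dupr\'e la Tour and Igarashi. We then compute an exact envy-free division of this cake in PPAD and round it with $\mathcal R$. The bundles are assigned by the same bijection $\sigma$, so agent $i$ receives the bundle $A_{\sigma(i)}$.

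\textbf{Goods.} For each agent $i$, apply the construction to $w=\nu_i$. This gives a nonnegative function $\tilde\nu_i$ that is the Kuhn-triangulation interpolation of medians of sums of item values, and it can be written as a polynomial-size piecewise-linear arithmetic circuit in $(a,b)$. We have $\tilde\nu_i([a,a])=0$, but $\tilde\nu_i$ need not be strictly positive on nondegenerate intervals, and the cited result requires strict positivity. To get it, we use the perturbation $U_i([a,b])=\tilde\nu_i([a,b])+\gamma(b-a)$ for a tiny rational $\gamma>0$. Alternatively, we check whether the cited theorem only needs weak monotonicity; I would first try the perturbation.

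Let $(F_j)$ be an envy-free division for the $U_i$, and let $A=\mathcal R(y)$ be its rounding. Since $\mathcal R$ does not depend on $w$, the same partition serves every agent. By \eqref{eq:rounding},
\[
\nu_i^-(A_j)\le\tilde\nu_i(F_j)\le\tilde\nu_i(F_{\sigma(i)})+\gamma m\le\nu_i^+(A_{\sigma(i)})+\gamma m.
\]
For additive nonnegative $\nu_i$ we have $\nu_i^+(A_{\sigma(i)})=\nu_i(A_{\sigma(i)})$, and $\nu_i^-(A_j)$ is $\nu_i(A_j)$ minus the larger of its two end items. This end-item removal is exactly what EF1$_{\mathrm{outer}}$ needs, up to the additive error $\gamma m$.

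\textbf{Removing the error.} This is the main obstacle. All values are rationals with denominators dividing some common $D$ of polynomial bit length. Choosing $\gamma<1/(Dm)$ turns a strict violation, which is at least $1/D$, into a contradiction. This works provided the chain above is between item sums, and it is: the first and last terms are item sums. The worry is that the perturbed $U_i$ also enter the envy-free condition in a nonlinear way through $\mathcal R$. They do not, because $\mathcal R$ depends only on the cut vector, so the chain is valid.

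\textbf{Chores.} Take $w=-\kappa_i$, so that $\tilde w\le0$, and perturb by $-\gamma(b-a)$ to obtain strict negativity. The cited theorem then applies after adding a constant, as described above. The inequalities reverse: $\kappa_i^-(A_{\sigma(i)})\le\kappa_i(A_j)+\gamma m$, where $\kappa_i^-(A_{\sigma(i)})$ removes one end chore from agent $i$'s own bundle. This is EF1$_{\mathrm{outer}}$ for chores, again after the same rounding argument to eliminate $\gamma m$.

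Finally, the whole map, from evaluating the circuits to applying $\mathcal R$ to the output cuts, runs in polynomial time, so it is a valid reduction to a PPAD problem.
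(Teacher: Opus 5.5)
Your proposal is correct and follows the paper's proof. You perturb the Dupr\'e la Tour--Igarashi virtual valuations by a length term to get a strict sign, compute an exact envy-free division in PPAD via Filos-Ratsikas et al., round with the valuation-independent map $\mathcal R$ under the same $\sigma$, and remove the additive error by integrality. The only difference is cosmetic: the paper first scales each agent's values to integers and uses the perturbation $|F|/(2m)$, while you keep a common denominator $D$ and take $\gamma<1/(Dm)$, which is equivalent.
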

\begin{proof}
Scaling the values of each agent by a positive integer, we may assume that they are integers.
Let $w_i=\nu_i$ for goods and $w_i=-\kappa_i$ for chores.
We apply the result above to the cake $[0,m]$, in which agent $i$ values each interval $F$ at
\[
 U_i(F)=\tilde w_i(F)+\frac{|F|}{2m}\quad\text{for goods},\qquad
 U_i(F)=\tilde w_i(F)-\frac{|F|}{2m}\quad\text{for chores}.
\]
Since $\tilde w_i([a,a])=0$ and $\tilde w_i$ is nonnegative for goods and nonpositive for chores, these values are $0$ for intervals of length zero, and positive for goods and negative for chores otherwise.
Moreover, since $\tilde w_i$ is an affine interpolation of $O(m^2)$ values, each $U_i$ is given by a piecewise-linear arithmetic circuit of polynomial size.
Thus, we can compute in PPAD cuts $y$ and a bijection $\sigma$ with $U_i(F_{\sigma(i)})\ge U_i(F_j)$ for all $i,j$, and we output $\mathcal R(y)$ with the same $\sigma$.
Since the perturbation is at most $\frac12$, these inequalities and~\eqref{eq:rounding} give
\[
 w_i^+(A_{\sigma(i)})\ge\tilde w_i(F_{\sigma(i)})\ge\tilde w_i(F_j)-\tfrac12\ge w_i^-(A_j)-\tfrac12,
\]
and hence $w_i^+(A_{\sigma(i)})\ge w_i^-(A_j)$, since both sides are integers.
This is EF1$_{\mathrm{outer}}$: for goods, $\nu_i(A_{\sigma(i)})$ is at least the value of $A_j$ after removing at most one good at an end, and for chores, $\kappa_i(A_j)$ is at least the cost of $A_{\sigma(i)}$ after removing at most one chore at an end.
\end{proof}

\begin{proof}[Proof of Theorem~\ref{thm:discrete}]
Let $c\ge1$ be a fixed integer, or let $c=\lfloor m^\alpha\rfloor$ for a fixed rational $0<\alpha<1$.
Every EF1$_{\mathrm{outer}}$ allocation is both EF$c$ and EF$c_{\mathrm{outer}}$, and hence Lemma~\ref{lem:discrete-membership} gives totality and a search reduction to a problem in PPAD.
Both solution predicates are polynomial-time verifiable: for each pair of agents, remove up to $c$ goods of largest value from the envied bundle for EF$c$, or enumerate the pairs of a prefix and a suffix of total size at most $c$ for EF$c_{\mathrm{outer}}$.
Thus, both problems belong to PPAD.
For a fixed $c$, Lemma~\ref{lem:quantile} gives hardness under the stated restrictions on binary valuations.
For $c=\lfloor m^\alpha\rfloor$, take in the proof of Lemma~\ref{lem:quantile} the smallest integer $k\ge4/\varepsilon$ with $k^{1-\alpha}\ge2n^\alpha/\varepsilon$, which is polynomial in the input length.
Since $m=nk$, we have $(c+2)/k\le n^\alpha k^{\alpha-1}+2/k\le\varepsilon$, and hence the proof of Lemma~\ref{lem:quantile} applies.
\end{proof}

In contrast, for a fixed $\delta>0$ and $c=\lceil\delta m\rceil$, computing a contiguous EF$c$ or EF$c_{\mathrm{outer}}$ allocation is solvable in polynomial time.
If $n\ge1/\delta$, dividing the path into $n$ intervals of at most $\lceil m/n\rceil\le c$ goods and assigning them arbitrarily gives an EF$c_{\mathrm{outer}}$ allocation, since every bundle can be removed entirely.
Otherwise, $n$ is a constant, and we can enumerate all cuts and assignments.

\section{Extension to chores}\label{sec:chores}
Our results extend to chores, where each agent prefers pieces of small cost.
We use the definitions for chores in Sections~\ref{sec:cake} and~\ref{sec:items}.
Contiguous EF1$_{\mathrm{outer}}$ allocations of chores always exist~\cite{SunL2026,BiloLV2026}.
Bil\`o, Loebl, and Vinci~\cite{BiloLV2026} conjectured that computing contiguous EF1 allocations is PPAD-complete for goods and for chores, and similarly for a stronger notion that coincides with EF1$_{\mathrm{outer}}$ for additive valuations.
Theorem~\ref{thm:discrete} confirms these conjectures for additive valuations of goods, and the following proposition confirms them for additive costs.

\begin{proposition}\label{prop:chores}
Computing a contiguous $\varepsilon$-envy-free division of a chore for normalized rational piecewise-constant additive costs is PPAD-complete, and so is computing a contiguous envy-free one.
The hardness holds even when $\varepsilon$ is given in unary, and even when every cost density takes only the values $\lambda$ and $2\lambda$ for some $\lambda>0$ common to all agents and has at most six breakpoints.
For every fixed integer $c\ge1$, and also for $c=\lfloor m^\alpha\rfloor$ with every fixed rational $0<\alpha<1$, computing a contiguous EF$c$ or EF$c_{\mathrm{outer}}$ allocation of chores on a path is PPAD-complete for nonnegative rational additive costs, even for binary costs in which every chore has positive cost for exactly one agent and all agents have positive cost for the same number of chores.
\end{proposition}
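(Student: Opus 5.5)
Membership for all chore problems comes from earlier results. For the cake, a division of a chore with nonpositive piecewise-constant densities is handled by the result of Filos-Ratsikas et al.\ quoted before Lemma~\ref{lem:discrete-membership}, in the case $U_i<0$ on nondegenerate intervals. If some density vanishes on an interval, first perturb by $-\gamma|F|$ for a tiny rational $\gamma$, or rely on the membership results cited in Lemma~\ref{lem:membership}, which do not need sign assumptions. For EF$c$ and EF$c_{\mathrm{outer}}$ of chores, Lemma~\ref{lem:discrete-membership} already covers chores, and since EF1$_{\mathrm{outer}}$ implies both notions, the verification argument from the proof of Theorem~\ref{thm:discrete} transfers. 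So the work is hardness.

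For the cake, I would follow the overview and use the costs $C_i(F)=3|F|-U_i(F)$ on $[0,L]$. By Lemma~\ref{lem:values}, the cost densities $3-f_i$ take only the values $1$ and $2$ and have at most six breakpoints, and the total cost is $2L-6$. Normalizing gives densities $\lambda$ and $2\lambda$ with $\lambda=L/(2L-6)$, and a unary $\varepsilon$ scaled as before. I would then rerun Section~\ref{sec:soundness} with the roles inverted.

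Localization is unchanged, since every cost is $2|F|$ minus at most $6$. For prices, the relevant inequality for chores is $C_i(P_j)-C_i(P_i)\ge q_j-q_i$, with equality at $\pi(i)$. Dividing by $2$, this amounts to taking $q_j=\ell_j-\tfrac12(\text{bonus term})$, that is, $q_j=\ell_j-z_v/2$ for $j\in S_v$. Envy-freeness then gives $q_{\sigma(i)}\le\min\{q_i,q_{\pi(i)}\}+\eta$, and the mirror image of Lemma~\ref{lem:cycle} (apply it to $-q$) shows that all prices nearly agree. Consequently, pieces in $S_v$ are longer by about $z_v/2$. The bonus table needs rechecking: the strict-preference inequality for $j\notin\{i,\pi(i)\}$ uses the gap of $1$ between bonuses, and halving it still leaves slack.

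To recover the same decoding shift, I would reverse the arrangement of the sets and use excess $24$ instead of $12$. In each bin, $S_v$ then has $24$ fewer indices among the first $a_u$ for each arc $(v,u)$, which also requires enlarging $R$, for instance $R=64(s+1)$, so that the blocks stay nonempty and the boundary structure (S1)--(S2) survives. The cut then moves left by $24\cdot z_v/2=12z_v$, and Lemma~\ref{lem:decoding} and the final argument go through verbatim.

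The main obstacle is making the bin construction and the cyclic order satisfy (S2) with reversed counts, while keeping the envelope inequalities exact for agents $a_v$ and $a_v+1$. Their core bonus lengths $2,3$ and $5/2,5/2$ may need to be re-tuned, since the sign of the boundary term flips.

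For indivisible chores, I would repeat Lemma~\ref{lem:quantile} with quantiles of the cost functions. Chores are placed at cumulative cost $(r-1/2)/k$, each with cost $1$ for one agent. For binary costs, EF$c$ gives $\kappa_i(X_i)\le\kappa_i(X_j)+c$, and the same estimate turns an EF$c$ allocation into an $\varepsilon$-envy-free chore division. The choice of $k$ for $c=\lfloor m^\alpha\rfloor$ is the same as in the proof of Theorem~\ref{thm:discrete}.
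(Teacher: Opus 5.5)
Your hardness part is essentially the paper's (Appendix~\ref{app:chores}), but there is a genuine gap in membership for \emph{exact} envy-free chore division.

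\textbf{The gap.} The statement claims membership for all normalized piecewise-constant costs, including cost densities that vanish on some intervals. For those, the result of Filos-Ratsikas et al.\ quoted before Lemma~\ref{lem:discrete-membership} does not apply, since it requires $U_i([a,b])<0$ for all $a<b$. Lemma~\ref{lem:membership} is stated for valuations of goods, so your fallback that the cited results need no sign assumption is unsupported.

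Your perturbation by $\gamma|F|$ gives an allocation that is exactly envy-free for the perturbed costs, hence only $\gamma$-envy-free for the original ones. That settles the $\varepsilon$-version (take $\gamma\le\varepsilon$), but not the exact one, and the perturbed solution is genuinely not exact in general. For example, take two agents with cost density $2$ on $[0,\tfrac12]$ and $0$ on $[\tfrac12,1]$. The unique envy-free cut is $\tfrac14$, whereas the perturbed instance forces the cut $(1+\gamma)/(4+2\gamma)$.

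The missing step is an approximate-to-exact conversion. The paper sets $\gamma=W^{-20Kn}$, where $W$ bounds all numerators and denominators and $K$ bounds the number of constant pieces of each density. It then applies Lemma~6.2 of Goldberg, Hollender, and Suksompong, which turns any $W^{-20Kn}$-envy-free allocation for piecewise-constant densities into an exactly envy-free one in polynomial time by linear programming; its proof carries over to costs unchanged.

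\textbf{Hardness.} This part follows the paper's route: costs $3|F|-U_i(F)$, reversed transfers of $24$ indices, and larger bins. Your prices $\ell_j-\tfrac12h_j$, with $h_j=z_v$ on $S_v$ and $0$ elsewhere, are $-\tfrac12$ times the paper's $q'_j=h_j-2\ell_j$. The obstacles you anticipate do not arise:
\begin{itemize}
\item Subtracting $3\ell_j$ from both $U_i(P_j)$ and the goods prices $q_j$ turns Lemma~\ref{lem:envelope} into the chore envelope verbatim. So no bonus lengths need retuning, and the boundary-agent cases need no separate check.
\item (S2) holds as before once every bin has slack for up to $48$ incoming indices. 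The paper uses bins of size $32(s+2)$ with $|S_v|=32(s+1)$, and your enlarged bins also suffice.
\end{itemize}
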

\begin{proof}
\emph{Membership.} For chore division, let $W\ge3$ be an integer such that every numerator and every denominator of the rational numbers describing the costs is at most $W$ in absolute value, let $K\ge1$ be an integer such that every cost density is constant on at most $K$ intervals, and let $\delta=W^{-20Kn}$.
Applying the result stated before Lemma~\ref{lem:discrete-membership} to the values $-C_i(X)-\delta|X|$ of the intervals $X$, which are $0$ for intervals of length zero and negative otherwise, gives in PPAD a $\delta$-envy-free allocation for the costs.
Goldberg, Hollender, and Suksompong~\cite[Lemma~6.2]{GoldbergHS2020b} showed that, for piecewise-constant valuations, an envy-free allocation can be computed in polynomial time from any $W^{-20Kn}$-envy-free allocation by linear programming, and their proof applies to costs without change.
Hence, an envy-free allocation for the costs can be computed in PPAD, and it is also $\varepsilon$-envy-free.
For chores on a path, Lemma~\ref{lem:discrete-membership} gives an EF1$_{\mathrm{outer}}$ allocation, which is both EF$c$ and EF$c_{\mathrm{outer}}$, and both predicates can be verified in polynomial time as in the proof of Theorem~\ref{thm:discrete}.

\emph{Hardness.} Appendix~\ref{app:chores} adapts the reduction of Section~\ref{sec:hardness} to costs and proves hardness for chore division under the stated restrictions, which also gives hardness for exact envy-freeness.
Finally, the quantile transfer of Lemma~\ref{lem:quantile} applies to these instances without change, with costs in place of values and with $k$ chosen as in the proof of Theorem~\ref{thm:discrete}: for binary costs, every EF$c$ allocation satisfies $\kappa_i(X_i)\le\kappa_i(X_j)+c$, and hence $C_i(Y_i)-C_i(Y_j)\le(c+2)/k\le\varepsilon$.
\end{proof}

\section{Concluding remarks}\label{sec:conclusion}
Our hardness extends to \emph{piecewise-uniform} valuations, whose densities are a positive constant on a finite union of intervals and zero elsewhere.
Take a hard instance of Lemma~\ref{lem:quantile} for $c=0$, replace the $j$th of its $m$ goods by the interval $[(j-1)/m,j/m]$, and let each agent have a uniform density on the intervals of the goods it values.
By Goldberg, Hollender, and Suksompong~\cite[Proposition~6.8]{GoldbergHS2020b}, every contiguous $1/m$-envy-free allocation of this cake can be rounded in polynomial time to an envy-free allocation of the goods.
Hence, the problem remains PPAD-complete for piecewise-uniform valuations with pairwise disjoint supports, even when $\varepsilon$ is given in unary.
In contrast, the problem is easy when the supports are pairwise disjoint and each of them consists of at most two intervals.
Then, each agent $i$ values one of its support intervals, say $J_i$, at least $1/2$, and the intervals $J_1,\ldots,J_n$ are pairwise disjoint.
Placing one cut between each pair of consecutive intervals $J_i$ divides the whole cake into $n$ pieces, each containing exactly one of these intervals.
Giving each agent $i$ the piece containing $J_i$ yields an envy-free allocation, since every agent values its piece at least $1/2$, and hence at least as much as the rest of the cake.
It would be interesting to determine the complexity for pairwise disjoint supports of at most three intervals each, and for supports of at most two intervals that may overlap, the setting in which consensus halving is PPA-complete~\cite{FilosRatsikasHSZ2023}.

Several further questions remain open.
In the hard instances of Theorem~\ref{thm:main}, $\varepsilon=\Theta(n^{-4})$, whereas Barman and Kulkarni~\cite{BarmanK2023} computed a contiguous $(1/4+o(1))$-envy-free allocation in polynomial time.
For general normalized rational piecewise-constant additive valuations, the complexity for fixed $0<\varepsilon\le1/4$ remains open.

Under a density bound $\rho_i\le M$, a contiguous $\varepsilon$-envy-free allocation can be computed in time $2^{O(M/\varepsilon)}$ times a polynomial in the input length, and hence there is an efficient polynomial-time approximation scheme for additive envy when $M$ is a constant.
If $n\ge M/\varepsilon$, dividing the cake into $n$ equal-length pieces gives every piece value at most $\varepsilon$ to every agent, and hence any assignment is $\varepsilon$-envy-free.
Otherwise, let $Q=\lceil4M/\varepsilon\rceil$, and enumerate the at most $\binom{Q+n-1}{n-1}\le2^{Q+n}$ partitions into $n$ pieces whose cuts are multiples of $1/Q$.
For each partition, search for a perfect matching in the bipartite graph that joins agent $i$ to piece $j$ if agent $i$ values piece $j$ within $\varepsilon$ of its favorite piece.
Every such matching gives an $\varepsilon$-envy-free allocation, and some partition has one, since rounding down the cuts of a contiguous envy-free allocation to multiples of $1/Q$ changes every envy difference by at most $4M/Q\le\varepsilon$.
Theorem~\ref{thm:main} rules out a fully polynomial-time approximation scheme unless $\mathrm{PPAD}\subseteq\mathrm{FP}$, even with two density values and at most six breakpoints.
Closing the gap between hardness at $\varepsilon=\Theta(n^{-4})$ and the guarantee $\varepsilon=\Theta(1/n)$ of equal division remains open.

\appendix
\section{Hardness for chores}\label{app:chores}
We prove the hardness part of Proposition~\ref{prop:chores} for chore division by reusing the reduction of Section~\ref{sec:hardness} with one change in the sets $S_v$.

\paragraph{Construction.}
Under the costs defined below, every piece indexed by $S_v$ becomes longer, rather than shorter, by about $z_v/2$, and hence we reverse the transfers between the bins and double their size.
Let $R'=32(s+2)$ and $N'=(s+1)R'$, and construct the sets $S_v$ and the cyclic order $\pi$ as in Section~\ref{sec:construction}, with $R'$ and $N'$ in place of $R$ and $N$, except that, for every arc $(v,u)$, we transfer $24$ indices of $S_v$ from the $u$th bin to the $(u+1)$st bin.
Thus, the block for $S_v$ in the $u$th bin has $32-24A_{uv}+24A_{u-1,v}\ge8$ indices, the blocks in each bin have at most $32s+48=R'-16$ indices in total, \ref{S:arc} holds as before, and \ref{S:counts} becomes $|S_v|=32(s+1)$ and $|S_v\cap[a_u]|=32u-24A_{uv}$, where $a_u=uR'$.
Let $B=40N'$ and $L=N'B$, let $U_i$ be the valuations of Section~\ref{sec:construction} for these sets, and let agent $i$ have the cost $C_i(F)=3|F|-U_i(F)$.
Its density is $3$ minus that of $U_i$, and hence it takes only the values $1$ and $2$ and has at most six breakpoints, and $C_i([0,L])=2L-6$.
The target instance consists of the normalized costs and $\varepsilon=1/(8000N'^4)$, and its cost densities take only the values $\lambda$ and $2\lambda$ for $\lambda=L/(2L-6)$.
As in Section~\ref{sec:construction}, every $\varepsilon$-envy-free allocation of it gives an allocation of $[0,L]$ with $C_i(P_{\sigma(i)})\le C_i(P_j)+\eta$ for all $i,j$, where $\eta=1/(100N'^2)$.

\paragraph{Analysis.}
Let $W_i=-C_i$, so that $W_i(F)=U_i(F)-3|F|$ lies between $-2|F|$ and $-2|F|+6$, and the allocation satisfies $W_i(P_{\sigma(i)})\ge W_i(P_j)-\eta$.
If agent $i$ receives a longest piece and $P_j$ is a shortest piece, then $-2\ell_{\sigma(i)}+6\ge W_i(P_{\sigma(i)})\ge W_i(P_j)-\eta\ge-2\ell_j-\eta$.
Hence, the lengths differ by less than $4$, every cut satisfies $|\theta_j-jB|<4N'$, and Lemma~\ref{lem:localization} holds with the same proof.
Lemma~\ref{lem:envelope} also holds, since its proof uses only Lemma~\ref{lem:localization}, \ref{S:arc}, and the bonuses.
Subtracting $3\ell_j$ from both $U_i(P_j)$ and $q_j$ shows that $W_i$ satisfies~\eqref{eq:envelope} with the prices $q'_j=q_j-3\ell_j=h_j-2\ell_j$, where $h_j=z_v$ for $j\in S_v$ and $h_j=0$ for $j$ in no set.
Thus, the argument after~\eqref{eq:envelope} and Lemma~\ref{lem:cycle} give $\max_jq'_j-\min_jq'_j\le N'\eta$.
Finally, let $\bar q'$ be the average of the $q'_j$.
Since $\ell_j=(h_j-q'_j)/2$, $\sum_j\ell_j=N'B$, and $\sum_jh_j=32(s+1)\sum_vz_v$, we have $\bar q'+2B=\frac{32(s+1)}{N'}\sum_vz_v$.
Since $\sum_{j\in[a_u]}h_j=\sum_v|S_v\cap[a_u]|\,z_v$ and $a_u\cdot32(s+1)/N'=32u$, we obtain
\begin{align*}
 \theta_{a_u}-a_uB
 &=\sum_{j\in[a_u]}(\ell_j-B)
 =-\frac12\sum_{j\in[a_u]}(q'_j-\bar q')-\frac{a_u}2(\bar q'+2B)+\frac12\sum_{j\in[a_u]}h_j\\
 &=-\frac12\sum_{j\in[a_u]}(q'_j-\bar q')+\frac12\sum_{v\in[s]}\bigl(|S_v\cap[a_u]|-32u\bigr)z_v
 =-\frac12\sum_{j\in[a_u]}(q'_j-\bar q')-12(Az)_u.
\end{align*}
Hence, $|\theta_{a_u}-a_uB+12(Az)_u|\le N'^2\eta/2=\frac1{200}$, and the proof of Theorem~\ref{thm:main} then applies without change.

\section*{AI Methodology}
An initial version of the reduction in Section~\ref{sec:hardness} was obtained with OpenAI's ChatGPT.
The authors then verified, corrected, and simplified the construction and the proofs, using OpenAI's Codex and Anthropic's Claude to assist with developing research ideas, constructing and checking proofs, and drafting and revising the manuscript.
The authors take full responsibility for the content of this paper, including all mathematical claims and proofs.

{\small
\bibliographystyle{alphaurl}
\bibliography{references}

@article{Su1999,
  author = {Francis Edward Su},
  title = {Rental Harmony: {Sperner}'s Lemma in Fair Division},
  journal = {The American Mathematical Monthly},
  volume = {106},
  number = {10},
  pages = {930--942},
  year = {1999},
  doi = {10.2307/2589747},
}

@article{DengQS2012,
  author = {Xiaotie Deng and Qi Qi and Amin Saberi},
  title = {Algorithmic Solutions for Envy-Free Cake Cutting},
  journal = {Operations Research},
  volume = {60},
  number = {6},
  pages = {1461--1476},
  year = {2012},
  doi = {10.1287/opre.1120.1116},
}

@article{GoldbergHS2020b,
  author = {Paul W. Goldberg and Alexandros Hollender and Warut Suksompong},
  title = {Contiguous Cake Cutting: Hardness Results and Approximation Algorithms},
  journal = {Journal of Artificial Intelligence Research},
  volume = {69},
  pages = {109--141},
  year = {2020},
  doi = {10.1613/jair.1.12222},
}

@inproceedings{Alijani+2017,
  author = {Reza Alijani and Majid Farhadi and Mohammad Ghodsi and Masoud Seddighin and Ahmad S. Tajik},
  title = {Envy-Free Mechanisms with Minimum Number of Cuts},
  booktitle = {Proceedings of the 31st AAAI Conference on Artificial Intelligence (AAAI 2017)},
  pages = {312--318},
  year = {2017},
  doi = {10.1609/aaai.v31i1.10584},
}

@inproceedings{BarmanK2023,
  author = {Siddharth Barman and Pooja Kulkarni},
  title = {Approximation Algorithms for Envy-Free Cake Division with Connected Pieces},
  booktitle = {Proceedings of the 50th International Colloquium on Automata, Languages, and Programming (ICALP 2023)},
  pages = {16:1--16:19},
  year = {2023},
  doi = {10.4230/LIPIcs.ICALP.2023.16},
}

@article{HollenderR2025,
  author = {Alexandros Hollender and Aviad Rubinstein},
  title = {Envy-Free Cake-Cutting for Four Agents},
  journal = {Journal of the ACM},
  volume = {72},
  number = {5},
  pages = {34:1--34:54},
  year = {2025},
  doi = {10.1145/3765615},
}

@misc{HollenderMR2026,
  author = {Alexandros Hollender and Gilbert Maystre and Kilian Risse},
  title = {On Cutting Cakes and Crossing Curves},
  year = {2026},
  note = {arXiv:2606.13980. Accepted at EC 2026},
}

@article{DeligkasFHM2024,
  author = {Argyrios Deligkas and John Fearnley and Alexandros Hollender and Themistoklis Melissourgos},
  title = {{Pure-Circuit}: Tight Inapproximability for {PPAD}},
  journal = {Journal of the ACM},
  volume = {71},
  number = {5},
  pages = {31:1--31:48},
  year = {2024},
  doi = {10.1145/3678166},
}

@article{Papadimitriou1994,
  author = {Christos H. Papadimitriou},
  title = {On the Complexity of the Parity Argument and Other Inefficient Proofs of Existence},
  journal = {Journal of Computer and System Sciences},
  volume = {48},
  number = {3},
  pages = {498--532},
  year = {1994},
  doi = {10.1016/S0022-0000(05)80063-7},
}

@inproceedings{Bilo+2019,
  author = {Vittorio Bil\`{o} and Ioannis Caragiannis and Michele Flammini and Ayumi Igarashi and Gianpiero Monaco and Dominik Peters and Cosimo Vinci and William S. Zwicker},
  title = {Almost Envy-Free Allocations with Connected Bundles},
  booktitle = {Proceedings of the 10th Innovations in Theoretical Computer Science Conference (ITCS 2019)},
  pages = {14:1--14:21},
  year = {2019},
  doi = {10.4230/LIPIcs.ITCS.2019.14},
}

@inproceedings{Igarashi2023,
  author = {Ayumi Igarashi},
  title = {How to Cut a Discrete Cake Fairly},
  booktitle = {Proceedings of the 37th AAAI Conference on Artificial Intelligence (AAAI 2023)},
  pages = {5681--5688},
  year = {2023},
  doi = {10.1609/aaai.v37i5.25705},
}

@article{KawaseRS2026,
  author = {Yasushi Kawase and Bodhayan Roy and Mohammad Azharuddin Sanpui},
  title = {Contiguous Allocation of Indivisible Items on a Path},
  journal = {Autonomous Agents and Multi-Agent Systems},
  volume = {40},
  pages = {44},
  year = {2026},
  doi = {10.1007/s10458-026-09773-5},
}

@article{ChenDT2009,
  author = {Xi Chen and Xiaotie Deng and Shang-Hua Teng},
  title = {Settling the Complexity of Computing Two-Player {Nash} Equilibria},
  journal = {Journal of the ACM},
  volume = {56},
  number = {3},
  pages = {14:1--14:57},
  year = {2009},
  doi = {10.1145/1516512.1516516},
}

@article{Stromquist1980,
  author = {Walter Stromquist},
  title = {How to Cut a Cake Fairly},
  journal = {The American Mathematical Monthly},
  volume = {87},
  number = {8},
  pages = {640--644},
  year = {1980}
}

@article{Stromquist2008,
  author = {Walter Stromquist},
  title = {Envy-Free Cake Divisions Cannot Be Found by Finite Protocols},
  journal = {The Electronic Journal of Combinatorics},
  volume = {15},
  pages = {\#R11},
  year = {2008}
}

@article{MarencoT2014,
  author = {Javier Marenco and Tom\'{a}s Tetzlaff},
  title = {Envy-Free Division of Discrete Cakes},
  journal = {Discrete Applied Mathematics},
  volume = {164},
  pages = {527--531},
  year = {2014}
}

@inproceedings{GaoRRS2024,
  author = {Ruiquan Gao and Mohammad Roghani and Aviad Rubinstein and Amin Saberi},
  title = {Hardness of Approximate {Sperner} and Applications to Envy-Free Cake Cutting},
  booktitle = {Proceedings of the 65th IEEE Annual Symposium on Foundations of Computer Science (FOCS 2024)},
  pages = {1294--1331},
  year = {2024},
  doi = {10.1109/FOCS61266.2024.00085},
}

@inproceedings{PapadimitriouP2021,
  author = {Christos H. Papadimitriou and Binghui Peng},
  title = {Public Goods Games in Directed Networks},
  booktitle = {Proceedings of the 22nd ACM Conference on Economics and Computation (EC 2021)},
  pages = {745--762},
  year = {2021},
  doi = {10.1145/3465456.3467616},
}

@inproceedings{ChenCPY2022,
  author = {Thomas Chen and Xi Chen and Binghui Peng and Mihalis Yannakakis},
  title = {Computational Hardness of the {Hylland}--{Zeckhauser} Scheme},
  booktitle = {Proceedings of the 33rd Annual ACM-SIAM Symposium on Discrete Algorithms (SODA 2022)},
  pages = {2253--2268},
  year = {2022},
  doi = {10.1137/1.9781611977073.90}
}

@inproceedings{FilosRatsikasG2018,
  author = {Aris Filos-Ratsikas and Paul W. Goldberg},
  title = {Consensus Halving is {PPA}-Complete},
  booktitle = {Proceedings of the 50th Annual ACM Symposium on Theory of Computing (STOC 2018)},
  pages = {51--64},
  year = {2018},
  doi = {10.1145/3188745.3188880}
}

@article{FilosRatsikasHSZ2023,
  author = {Aris Filos-Ratsikas and Alexandros Hollender and Katerina Sotiraki and Manolis Zampetakis},
  title = {Consensus-Halving: Does It Ever Get Easier?},
  journal = {SIAM Journal on Computing},
  volume = {52},
  number = {2},
  pages = {563--602},
  year = {2023},
  doi = {10.1137/20M1387493},
}

@article{Suksompong2019,
  author = {Warut Suksompong},
  title = {Fairly Allocating Contiguous Blocks of Indivisible Items},
  journal = {Discrete Applied Mathematics},
  volume = {260},
  pages = {227--236},
  year = {2019},
  doi = {10.1016/j.dam.2019.01.036}
}

@article{FilosRatsikasHHH2026,
  author = {Aris Filos-Ratsikas and Kristoffer Arnsfelt Hansen and Kasper H{\o}gh and Alexandros Hollender},
  title = {{PPAD}-Membership for Problems with Exact Rational Solutions: A General Approach via Convex Optimization},
  journal = {SIAM Journal on Computing},
  pages = {STOC24-297--STOC24-421},
  year = {2026},
  note = {Preliminary version in STOC 2024. Full version: arXiv:2312.01237},
  doi = {10.1137/24M1689089},
}

@inproceedings{SunL2026,
  author = {Ankang Sun and Bo Li},
  title = {Connected {EF1} Allocations Exist in Discrete Chore Cutting},
  booktitle = {Proceedings of the 35th International Joint Conference on Artificial Intelligence (IJCAI-ECAI 2026)},
  pages = {3532--3540},
  year = {2026},
  doi = {10.24963/ijcai.2026/393},
}

@misc{DupreLaTourI2026,
  author = {Max {Dupr{\'e} la Tour} and Ayumi Igarashi},
  title = {From Cake-Cutting and Necklace-Splitting to Fair Division of Indivisible Items},
  year = {2026},
  note = {arXiv:2608.04340},
}

@inproceedings{BiloLV2026,
  author = {Vittorio Bil{\`o} and Martin Loebl and Cosimo Vinci},
  title = {Approximately Envy-Free and Equitable Allocations of Indivisible Items for Non-Monotone Valuations},
  booktitle = {Proceedings of the 40th AAAI Conference on Artificial Intelligence (AAAI 2026)},
  pages = {16700--16708},
  year = {2026},
  doi = {10.1609/aaai.v40i20.38712},
}

@article{DaskalakisGP2009,
  author = {Constantinos Daskalakis and Paul W. Goldberg and Christos H. Papadimitriou},
  title = {The Complexity of Computing a {Nash} Equilibrium},
  journal = {SIAM Journal on Computing},
  volume = {39},
  number = {1},
  pages = {195--259},
  year = {2009},
  doi = {10.1137/070699652},
}

@inproceedings{FilosRatsikasG2019,
  author = {Aris Filos-Ratsikas and Paul W. Goldberg},
  title = {The Complexity of Splitting Necklaces and Bisecting Ham Sandwiches},
  booktitle = {Proceedings of the 51st Annual ACM Symposium on Theory of Computing (STOC 2019)},
  pages = {638--649},
  year = {2019},
  doi = {10.1145/3313276.3316334},
}

@article{DeligkasFHM2025,
  author = {Argyrios Deligkas and John Fearnley and Alexandros Hollender and Themistoklis Melissourgos},
  title = {Constant Inapproximability for {PPA}},
  journal = {SIAM Journal on Computing},
  volume = {54},
  number = {1},
  pages = {163--192},
  year = {2025},
  doi = {10.1137/22M1536613},
}
}
\end{document}